%%
%% Copyright 2007, 2008, 2009 Elsevier Ltd
%%
%% This file is part of the 'Elsarticle Bundle'.
%% ---------------------------------------------
%%
%% It may be distributed under the conditions of the LaTeX Project Public
%% License, either version 1.2 of this license or (at your option) any
%% later version.  The latest version of this license is in
%%    http://www.latex-project.org/lppl.txt
%% and version 1.2 or later is part of all distributions of LaTeX
%% version 1999/12/01 or later.
%%
%% The list of all files belonging to the 'Elsarticle Bundle' is
%% given in the file `manifest.txt'.
%%

%% Template article for Elsevier's document class `elsarticle'
%% with numbered style bibliographic references
%% SP 2008/03/01
%%
%%
%%
%% $Id: elsarticle.cls,v 1.20 2008-10-13 04:24:12 cvr Exp $
%%
%%
\documentclass[preprint,12pt]{elsarticle}

%% Use the option review to obtain double line spacing
%% \documentclass[preprint,review,12pt]{elsarticle}

%% Use the options 1p,twocolumn; 3p; 3p,twocolumn; 5p; or 5p,twocolumn
%% for a journal layout:
%% \documentclass[final,1p,times]{elsarticle}
%% \documentclass[final,1p,times,twocolumn]{elsarticle}
%% \documentclass[final,3p,times]{elsarticle}
%% \documentclass[final,3p,times,twocolumn]{elsarticle}
%% \documentclass[final,5p,times]{elsarticle}
%% \documentclass[final,5p,times,twocolumn]{elsarticle}

%% if you use PostScript figures in your article
%% use the graphics package for simple commands
%% \usepackage{graphics}
%% or use the graphicx package for more complicated commands
%% \usepackage{graphicx}
%% or use the epsfig package if you prefer to use the old commands
%% \usepackage{epsfig}

%% The amssymb package provides various useful mathematical symbols
\usepackage{amssymb}
\usepackage{amscd}
\usepackage{amsthm}
\usepackage{amsmath}
\usepackage{mathrsfs}
\newtheorem{theorem}{Theorem}[section]

\newtheorem{lemma}[theorem]{Lemma}
\newtheorem{corollary}[theorem]{Corollary}
\newtheorem{proposition}[theorem]{Proposition}
\newproof{Proof}{Proof}
\numberwithin{equation}{section}
\theoremstyle{definition}
\newtheorem{example}[theorem]{Example}
%% The amsthm package provides extended theorem environments
%% \usepackage{amsthm}

%% The lineno packages adds line numbers. Start line numbering with
%% \begin{linenumbers}, end it with \end{linenumbers}. Or switch it on
%% for the whole article with \linenumbers after \end{frontmatter}.
%% \usepackage{lineno}

%% natbib.sty is loaded by default. However, natbib options can be
%% provided with \biboptions{...} command. Following options are
%% valid:

%%   round  -  round parentheses are used (default)
%%   square -  square brackets are used   [option]
%%   curly  -  curly braces are used      {option}
%%   angle  -  angle brackets are used    <option>
%%   semicolon  -  multiple citations separated by semi-colon
%%   colon  - same as semicolon, an earlier confusion
%%   comma  -  separated by comma
%%   numbers-  selects numerical citations
%%   super  -  numerical citations as superscripts
%%   sort   -  sorts multiple citations according to order in ref. list
%%   sort&compress   -  like sort, but also compresses numerical citations
%%   compress - compresses without sorting
%%
%% \biboptions{comma,round}

% \biboptions{}

%\journal{Journal of Geometry and Physics}

\begin{document}

\begin{frontmatter}

%% Title, authors and addresses

%% use the tnoteref command within \title for footnotes;
%% use the tnotetext command for the associated footnote;
%% use the fnref command within \author or \address for footnotes;
%% use the fntext command for the associated footnote;
%% use the corref command within \author for corresponding author footnotes;
%% use the cortext command for the associated footnote;
%% use the ead command for the email address,
%% and the form \ead[url] for the home page:
%%
%% \title{Title\tnoteref{label1}}
%% \tnotetext[label1]{}
%% \author{Name\corref{cor1}\fnref{label2}}
%% \ead{email address}
%% \ead[url]{home page}
%% \fntext[label2]{}
%% \cortext[cor1]{}
%% \address{Address\fnref{label3}}
%% \fntext[label3]{}

\title{Geometric Aspects of Self-adjoint Sturm-Liouville Problems}

%% use optional labels to link authors explicitly to addresses:
%% \author[label1,label2]{<author name>}
%% \address[label1]{<address>}
%% \address[label2]{<address>}

\author{Yicao Wang}

\address{Department of Mathematics, Hohai University, Nanjing 210098, China\\}
\ead{yicwang@hhu.edu.cn}

\begin{abstract}
%% Text of abstract
In the paper, we use $\mathrm{U}(2)$, the group of $2\times 2$ unitary matices, to parameterize the space of all self-adjoint boundary conditions for a fixed Sturm-Liouville equation on the interval $[0,1]$. The adjoint action of $\mathrm{U}(2)$ on itself naturally leads to a refined classification of self-adjoint boundary conditions--each adjoint orbit is a subclass of these boundary conditions. We give explicit parameterizations of those adjoint orbits of principal type, i.e. orbits diffeomorphic to the 2-sphere $S^2$, and investigate the behavior of the $n$-th eigenvalue $\lambda_n$ as a function on such orbits.
\end{abstract}

\begin{keyword}
%% keywords here, in the form: keyword \sep keyword
regular Sturm-Liouville problem \sep space of self-adjoint boundary conditions \sep adjoint orbit \sep eigenvalue
%% PACS codes here, in the form: \PACS code \sep code

%% MSC codes here, in the form: \MSC code \sep code
%% or \MSC[2008] code \sep code (2000 is the default)es
\MSC[2008] 53D18\sep 53D05 \sep 53C15
\end{keyword}

\end{frontmatter}

%%
%% Start line numbering here if you want
%%
% \linenumbers

%% main text
\section{Introduction}
Unbounded self-adjoint (SA for brevity) operators are very important objects in mathematical physics. In quantum mechanics, an observable is represented by an SA operator, rather than a symmetric one. In perturbative quantum field theory, when calculating the contribution of a one-loop graph, one should obtain the (regularized) determinant of a differential operator, but before that, a suitable SA extension should be chosen first. However, generally, there may be too many SA extensions by prescribing different SA boundary conditions. For example, consider the classical Sturm-Liouville (S-L) equation on $J=[0,1]$:
\begin{equation}ly:=-(py')'+qy=\lambda y,\quad 0<p\in C^1(J), q\in C(J). \label{SL}\end{equation}
 Then the set $\mathcal{U}$ of all complex SA boundary conditions can be divided into two mutually exclusive subsets. The first, called separated, includes boundary conditions of the form
\begin{equation}\left \{
\begin{array}{ll}
y(0)\cos \alpha-(py')(0)\sin \alpha=0,\\
 y(1)\cos \beta-(py')(1)\sin \beta=0,\end{array}
 \right.
 \label{B1}\end{equation}
 where $ \alpha \in [0, \pi), \beta\in (0,\pi]$; The second, called coupled, includes boundary conditions of the form
 \begin{equation}\left(
     \begin{array}{c}
       y(1) \\
       (py')(1) \\
     \end{array}
   \right)=e^{i\varphi}K\left(
     \begin{array}{c}
       y(0) \\
       (py')(0) \\
     \end{array}
   \right),
\label{B2} \end{equation}
 where $K\in\mathrm{SL(2, \mathbb{R})}=:\{k=\left(
            \begin{array}{cc}
              k_{11} & k_{12} \\
              k_{21} & k_{22} \\
            \end{array}
          \right); k_{ij}\in \mathbb{R}, \det K=1\}$, and $\varphi\in [0,2\pi)$.

It is well-known that the eigenvalues of the above S-L problem consisting of Eq.~(\ref{SL}) and an SA boundary condition are bounded from below and can be ordered to form a non-decreasing sequence
\[-\infty<\lambda_0\leq \lambda_1\leq \lambda_2\leq \cdots \leq \lambda_n\leq \cdots ,\]
approaching $\infty$ so that the number of times an eigenvalue can appear equals to its (geometric) multiplicity.

In this paper, we mainly put emphasis on the structure of $\mathcal{U}$ and assume that $p\equiv 1$ and $q\in C(J)$ for brevity, though the main body of results here holds for more general $p, q$'s. %The characteristic equation is \footnote{If no ambiguity arises from the context, we shall always simply write $y_1(1, \lambda)$ as $y_1$. Other notations are similar.
%\[k_{11}\dot{y}_2-k_{12}\dot{y}_1-k_{21}y_2+k_{22}y_1=2\cos \varphi.\]
%}

It's of interest to consider these $\lambda_n$ as functions on $\mathcal{U}$ and explore how they change when the boundary condition varies. It is already clear that, $\lambda_n$ are not continuous on $\mathcal{U}$ equipped with the natural topology \cite{Kong1}. However, when restricted on certain subset $S\subset\mathcal{U}$, $\lambda_n$ may have nice properties. For example, if on $S$ $\lambda_0$ is bounded from below, then all these $\lambda_n$ are continuous on $S$. This is called the \emph{continuity principle} in \cite{Kong1}, which we shall use frequently in the following.

An abstract theorem of von Neumann implies that $\mathcal{U}$ is \emph{globally} parameterized by $\mathrm{U}(2)$, the unitary group in complex dimension 2, but in the mathematical literature on S-L problems, in terms of boundary data, $\mathcal{U}$ is often viewed as a set of equivalence classes of matrices, say, a submanifold of the Grassmanian of 2-dimensional subspaces in $\mathbb{C}^4$. In this context, (\ref{B1}) and (\ref{B2}) are in fact preferred representative of these classes and the underlying group $\mathrm{U}(2)$ cannot be seen directly in this manner. Recently it is found in \cite{Asor} that there is, more or less, a \emph{canonical}  way to identify $\mathcal{U}$ with $\mathrm{U}(2)$ \footnote{This way of parameterizing SA extensions by $\mathrm{U}(2)$ is already known in the context of boundary triples, see for example \cite[Chap.~14]{Sch}}, which is the starting point of our paper.

  As a smooth 4-manifold, $\mathrm{U}(2)$ is very special. It is a compact Lie group and has a rich geometry. In this paper, however, we mainly consider one aspect of this geometry and its interplay with S-L problems: $\mathrm{U}(2)$ acts on itself by conjugation, i.e. $g\cdot u=gug^{-1}$ for $g, u\in \mathrm{U}(2)$. Orbits of this action are called adjoint orbits, each characterized by its eigenvalues (matrices in an orbit all have the same eigenvalues). \emph{Topologically}, these orbits are divided into two types, those consisting of a single point (the two eigenvalues are the same), and those diffeomorpic to the 2-sphere $S^2$ (the two eigenvalues are different). We shall mainly explore the behavior of $\lambda_n$ as functions on these spheres in the latter case.

 Note that in this paper, for brevity, by eigenvalues of a boundary condition $A$ (represented by a matrix) we always mean eigenvalues of the associated boundary value problem, while eigenvalues of $A$ refer to eigenvalues of the matrix $A$.

 The paper is organized as follows.

 Sec.~\ref{sec2} is divided into two subsections. In the first subsection, we discuss the structure of $\mathrm{U}(2)$ as the space of SA boundary conditions. We identify several subsets of $\mathrm{U}(2)$, parameterize them and show how these parameterizations are related to the ones given in (\ref{B1}) and (\ref{B2}). In the second subsection, we give a refined classification of SA boundary conditions in terms of adjoint orbits and parameterize orbits of \emph{principal type}--those diffeomorphic to $S^2$.

 Sec.~\ref{sec3} is devoted to briefly investigating the so-called \emph{characteristic curve} $\Gamma$ which is of great importance when one considers all SA boundary conditions together. To our knowledge, this curve was first investigated in \cite{Kong1}. The behavior of $\Gamma$ is complicated and we hardly add any new insight into it. We only rewrite it out in our context and write down \emph{the characteristic equation} in terms of it (Thm.~\ref{curve}). The advantage is that this equation is canonical and valid for all SA boundary conditions. In the end of this section, we point out the observation that $\Gamma$ has no joint point with almost all adjoint orbits of principal type. This shall imply the situation considered in Sec.~\ref{sec4} is general.

 In Sec.~\ref{sec4}, we investigate the behavior of $\lambda_n$ as a function on an adjoint orbit $\mathcal{O}$ of principal type. We show that $\lambda_n$ is continuous on $\mathcal{O}$. If, furthermore, $\Gamma$ has no joint point with $\mathcal{O}$, then $\lambda_n$ is a real analytic function on $\mathcal{O}$ and has exactly two critical points. If $[a_n, b_n]$ is the range of $\lambda_n$, then these $a_n, b_n$, $n\in \mathbb{N}$ are \emph{precisely} the zeros of a certain real analytic function and $a_n< b_n<a_{n+1}< b_{n+1}$ (Thm.~\ref{M1}, \ref{M2}, \ref{M3}). There are two viewpoints to regard eigenvalues of S-L problems: On one side, eigenvalues are roots of the characteristic equation. On the other side, eigenvalues can also be characterized in terms of quadratic forms using the \emph{min-max principle}. To obtain our results, we freely switch our viewpoint between the two if it is convenient. In the end of this section, we investigate the shape of the level subset of $\lambda$ in $\mathcal{O}$.

 The last short section can be viewed as a complement of Sec.~\ref{sec4}. We consider $\lambda_n$ as a function on the diagonal of the torus in $\mathrm{U}(2)$. We show that the range of $\lambda_n$ on $\mathrm{U}(2)$ is in fact already determined by its restriction on the diagonal (Thm.~\ref{d}).
\section{The space of SA boundary conditions and adjoint orbits}
\label{sec2}

\subsection{The space of SA boundary conditions}

Let $l_0$ and $l_1$ be the minimal and the maximal operators associated with $l$ respectively. von Neumann's abstract theory \cite[Chap.~13]{Sch} implies that the set $\mathcal{U}$ of all SA extensions of $l_0$ is parameterized by unitary transforms from $\ker (l_1-i\textup{I})$ to $\ker(l_1+i\textup{I})$. Since the two spaces are both 2-dimensional, \emph{topologically} $\mathcal{U}$ is just $\mathrm{U}(2)$. In this description, however, there is no canonical way to identify $\mathcal{U}$ with $\mathrm{U}(2)$, because to realize such a parameterization a distinguished transform should be chosen.

Recently, an explicit and canonical way of expressing SA boundary conditions in terms of elements of $\mathrm{U}(2)$ has been found \cite{Asor}. Let $y$ be a function in the Sobolev space $W_{2,2}(J)$ and \footnote{Here we use $\dot{y}$ to denote the outward unit normal derivative of $y$. So $\dot{y}(0)=-y'(0)$ and $\dot{y}(1)=y'(1)$. } \[\psi:=\left(
      \begin{array}{c}
        y(0) \\
        y(1) \\
      \end{array}
    \right), \quad
    \dot{\psi}:=\left(
      \begin{array}{c}
        \dot{y}(0) \\
        \dot{y}(1) \\
      \end{array}
    \right).
\]
An SA boundary condition then takes the following form:
\begin{equation}i(I+U)\dot{\psi}=(I-U)\psi,\label{U1}\end{equation}
where $I$ is the $2\times 2$ identity matrix. This way we shall identify $\mathcal{U}$ with $\mathrm{U}(2)$. For the details of Eq.~(\ref{U1}) and even its generalization, we refer the interested readers to \cite{Asor}.

Before proceeding further, we recall a description of $\mathrm{U}(2)$. Any element $g$ of $\mathrm{U}(2)$ can be decomposed into two factors:
\[g=\sqrt{\det g}\cdot (g/\sqrt{\det g}),\]
where $\sqrt{\det g}\in \mathrm{U}(1)$ is a square root of $\det g$, and $\frac{g}{\sqrt{\det g}}\in \mathrm{SU}(2)$, i.e. with determinant 1. Since there are two square roots of $\det g$, $\mathrm{U}(2)$ is the quotient of $\mathrm{U}(1)\times \mathrm{SU}(2)$ under the natural action of $\mathbb{Z}_2$. This result is often written as $\mathrm{U}(2)=\mathrm{U}(1)\times_{\mathbb{Z}_2} \mathrm{SU}(2)$. We denote the corresponding quotient map by $P$.

It's natural to classify all SA boundary conditions into two mutually exclusive subclasses according to whether $\det(I+U)$ equals 0 or not. Denote
\[\mathcal{U}_0=\{U\in \mathcal{U}|\det (I+U)=0\},\quad \mathcal{U}_1=\{U\in \mathcal{U}| \det(I+U)\neq0\}.\]

$\mathcal{U}_1$ is certainly open and dense in $\mathrm{U}(2)$. If $U\in \mathcal{U}_1$, then
\begin{equation}A:=-i(I+U)^{-1}(I-U)\label{Cay}\end{equation}
is actually a Hermitian matrix and precisely the Cayley transform of $U$. In terms of $A$, the boundary condition (\ref{U1}) can then be rewritten as
\begin{equation}\dot{\psi}=A\psi.\label{Cay1}\end{equation}
Note that since $A$ and $U$ are in 1-1 correspondence in $\mathcal{U}_1$, $A$ can also be viewed as the coordinate in the chart $\mathcal{U}_1\subset \mathcal{U}$ (so topologically $\mathcal{U}_1\simeq \mathbb{R}^4$). Let $y_1, y_2$ be the solutions of Eq.~(\ref{SL}), satisfying
\[y_1(0)=1,\quad y_1'(0)=0,\quad y_2(0)=0,\quad y_2'(0)=1.\]If $A=\left(\begin{array}{cc} a & b \\\bar{b} & c \\
\end{array} \right)$, where $a, c$ are real numbers and $b$ is complex, then the characteristic equation is
                                  \[\Delta(\lambda)=\det [\left(
                                    \begin{array}{cc}
                                      0 & -1 \\
                                      \dot{y}_1 & \dot{y}_2 \\
                                    \end{array}
                                  \right)-\left(
                                    \begin{array}{cc}
                                      a & b \\
                                      \bar{b} & c \\
                                    \end{array}
                                  \right)\left(
                                    \begin{array}{cc}
                                      1 & 0 \\
                                      y_1 & y_2 \\
                                    \end{array}
                                  \right)]=0,\]
i.e.
\begin{equation}-a\dot{y}_2+(ac-|b|^2)y_2-cy_1+\dot{y}_1-2\Re b=0.\label{chh}\end{equation}

Let's come to the structure of $\mathcal{U}_0$. If $U\in \mathcal{U}_0$, we can set
$U=e^{i\theta}\left(
                                    \begin{array}{cc}
                                      a & b \\
                                      -\bar{b} & \bar{a} \\
                                    \end{array}
                                  \right)$,
where $\theta\in[0, \pi]$, and $\left(
                                    \begin{array}{cc}
                                      a & b \\
                                      -\bar{b} & \bar{a} \\
                                    \end{array}
                                  \right)\in \mathrm{SU}(2)$. Let $a=re^{i\beta}, r\in[0, 1], \beta\in[0, 2\pi)$. Then one can find that
\begin{equation}e^{i\theta}=-r\cos \beta+i\sqrt{1-r^2\cos^2 \beta }.\label{th}\end{equation}
So $U$ is completely determined by its factor in $\mathrm{SU}(2)$. But $\pm I\in \mathrm{SU}(2)$ determine the same $U=-I$. This argument shows that $\mathcal{U}_0$ is topologically the 3-sphere $S^3$ with two points glued together.\footnote{ Topologically $\mathrm{SU}(2)\simeq S^3$.} A general element of $\mathcal{U}_0$ is of the following form:
\[e^{i\theta}\left(\begin{array}{cc}
                                      re^{i\beta} & \sqrt{1-r^2}e^{i\gamma} \\
                                      -\sqrt{1-r^2}e^{-i\gamma}& re^{-i\beta} \\
                                    \end{array}
                                  \right),\]
where $\theta$ is given by Eq.~(\ref{th}) and $r\in [0,1], \beta, \gamma\in [0, 2\pi)$.

There is another interesting subset $\mathcal{U}^{\mathbb{R}}\subset\mathcal{U}$, consisting of all real SA boundary conditions. As for the shape of $\mathcal{U}^{\mathbb{R}}$ in $\mathrm{U}(2)$, we have
\begin{proposition}$\mathcal{U}^{\mathbb{R}}=\{U\in \mathrm{U}(2)|U=U^t\}$, where the superscript $t$ denotes the transpose of a matrix. Furthermore, with $\mathrm{U}(2)$ viewed as $\mathrm{U}(1)\times_{\mathbb{Z}_2} \mathrm{SU}(2)$, $\mathcal{U}^{\mathbb{R}}$ is topologically just $P(S^1\times S^2)$ (for the precise meaning, see the proof).
\end{proposition}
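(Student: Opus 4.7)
My definition of a real boundary condition is that its solution space $L=\{(\psi,\dot\psi)\in\mathbb{C}^4:i(I+U)\dot\psi=(I-U)\psi\}$ is stable under entry-wise complex conjugation, equivalently $L=(L\cap\mathbb{R}^4)\otimes_{\mathbb{R}}\mathbb{C}$. First I would note that the coefficient matrix $[I-U,-i(I+U)]$ always has rank $2$ (if a row $v^t$ annihilates both blocks, then $v^t(I\pm U)=0$ forces $v=0$), so $\dim_{\mathbb{C}}L=2$. The key algebraic move is then to switch coordinates to $\phi_\pm:=\psi\pm i\dot\psi$, in which (\ref{U1}) collapses to the compact identity $U\phi_+=\phi_-$. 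A short check (if $\phi_+=0$ in $L$, then $\phi_-=U\phi_+=0$ forces $\psi=\dot\psi=0$) shows that $(\psi,\dot\psi)\mapsto\phi_+$ is a linear isomorphism $L\to\mathbb{C}^2$, so $\phi_+$ ranges over all of $\mathbb{C}^2$. Conjugating the defining equation and demanding $(\bar\psi,\bar{\dot\psi})\in L$ then translates into $U\bar U\,\bar\phi_+=\bar\phi_+$ for every $\bar\phi_+\in\mathbb{C}^2$; hence $U\bar U=I$, and combined with unitarity $U\bar U^t=I$ this forces $U=U^t$. The converse is the same computation read backwards.

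For the topological statement, the decomposition $\mathrm{U}(2)=\mathrm{U}(1)\times_{\mathbb{Z}_2}\mathrm{SU}(2)$ reduces matters to describing $\Sigma:=\{V\in\mathrm{SU}(2):V=V^t\}$, because $U=\lambda V$ satisfies $U=U^t$ iff $V=V^t$. Writing a generic $V=\begin{pmatrix}a&b\\-\bar b&\bar a\end{pmatrix}\in\mathrm{SU}(2)$, symmetry forces $b=-\bar b$, i.e.\ $b=iu$ with $u\in\mathbb{R}$, and then $|a|^2+u^2=1$ identifies $\Sigma$ with $S^2$. Therefore $\mathcal{U}^{\mathbb{R}}=P(\mathrm{U}(1)\times\Sigma)=P(S^1\times S^2)$, whose precise meaning is the quotient of $S^1\times S^2$ by the free $\mathbb{Z}_2$-action $(\lambda,V)\mapsto(-\lambda,-V)$; writing $a=s+it$ and identifying $V\leftrightarrow(s,t,u)\in S^2$, this is simply the diagonal antipodal action on the two factors.

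The main obstacle is the first step: choosing a parameterization-free notion of realness and then pushing the conjugation symmetry through (\ref{U1}) cleanly. On the open chart $\mathcal{U}_1$ one can shortcut via $\dot\psi=A\psi$ with $A$ Hermitian, where realness means $A$ real, i.e.\ $A=A^t$, equivalently $U=U^t$ via the Cayley transform (\ref{Cay}); but this argument misses $\mathcal{U}_0$ entirely, and so the $\phi_\pm$-reduction (or an equivalent row-space manipulation of $[I-U,-i(I+U)]$ versus its conjugate) is what gives a uniform treatment valid on all of $\mathrm{U}(2)$.
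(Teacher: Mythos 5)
Your proof is correct, and it follows the same overall strategy as the paper --- conjugate the defining relation, deduce $U\bar{U}=I$ (equivalently $U=\bar{U}^{-1}$), conclude $U=U^t$ from unitarity, and then reduce the topology to the symmetric matrices in $\mathrm{SU}(2)$ --- but the mechanism of the first step is genuinely different. The paper conjugates Eq.~(\ref{U1}) and multiplies through by $\bar{U}^{-1}$ to put the conjugated condition back in the normal form $i(I+\bar{U}^{-1})\bar{\dot{\psi}}=(I-\bar{U}^{-1})\bar{\psi}$, then reads off $U=\bar{U}^{-1}$; this comparison tacitly uses that the parameterization $U\mapsto$ boundary condition is injective, a fact inherited from the framework of \cite{Asor}. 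You instead work with the solution subspace $L\subset\mathbb{C}^4$, linearize Eq.~(\ref{U1}) into $U\phi_+=\phi_-$ with $\phi_\pm=\psi\pm i\dot{\psi}$, and prove that $\phi_+$ gives an isomorphism $L\cong\mathbb{C}^2$; this simultaneously yields $U\bar{U}=I$ and re-establishes the injectivity the paper assumes, so your argument is more self-contained and manifestly uniform across $\mathcal{U}_0$ and $\mathcal{U}_1$ (your $\phi_\pm$ are in effect the boundary-triple/von Neumann coordinates underlying the $\mathrm{U}(2)$ parameterization, which is why the equation collapses so cleanly). Your subspace-stability definition of realness agrees with the paper's ($\bar{y}$ satisfies the condition whenever $y$ does), so no gap arises there. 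The topological half is identical to the paper's: symmetry of the $\mathrm{SU}(2)$ factor forces $b$ purely imaginary, and $|a|^2+u^2=1$ identifies the fixed set with $S^2$; your explicit identification of $P(S^1\times S^2)$ as the quotient of $S^1\times S^2$ by the free diagonal antipodal $\mathbb{Z}_2$-action $(\lambda,V)\mapsto(-\lambda,-V)$ is a worthwhile precision, since the paper's ``for the precise meaning, see the proof'' never actually spells this out.
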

\begin{proof}
A real SA boundary condition is precisely one whose complex conjugate represents the same boundary condition except that $\psi, \dot{\psi}$ are replaced by $\bar{\psi}, \bar{\dot{\psi}}$. The complex conjugate of (\ref{U1}) is
\begin{equation*}i(I+\bar{U})\bar{\dot{\psi}}=-(I-\bar{U})\bar{\psi}.\end{equation*}
It can be rewritten as
\begin{equation*}i\bar{U}(I+\bar{U}^{-1})\bar{\dot{\psi}}=\bar{U}(I-\bar{U}^{-1})\bar{\psi},\end{equation*}
i.e.
\begin{equation*}i(I+\bar{U}^{-1})\bar{\dot{\psi}}=(I-\bar{U}^{-1})\bar{\psi}.\end{equation*}
Then reality means $U=\bar{U}^{-1}$, which is precisely $U=U^t$.
Let $U=e^{i\theta}\left(
                                    \begin{array}{cc}
                                      a & b \\
                                      -\bar{b} & \bar{a} \\
                                    \end{array}
                                  \right)$. Then reality precisely means $b$ is purely imaginary or zero. This observation immediately leads to the conclusion that $\mathcal{U}^{\mathbb{R}}=P(S^1\times S^2)$.

\end{proof}
\noindent \emph{Remark}. In \cite[Thm~3.3]{Kong2}, there is also a description of the space of all real SA boundary conditions. However, the global picture is more clear here.

Let's now see how the boundary conditions (\ref{B1}) (\ref{B2}) look like in $\mathrm{U}(2)$. It is easy to see that the separated boundary conditions correspond to $U$'s of diagonal form, forming a Cartan subgroup $\mathrm{H}$ of $\mathrm{U}(2)$, topologically a 2-torus.

 For the coupled case, two subcases should be distinguished, $k_{12}\neq 0$ and $k_{12}=0$.

\begin{proposition}\label{CH}If in the coupled case $k_{12}\neq 0$, then the corresponding $U$ lies in $\mathcal{U}_1$ and the associated Hermitian matrix is
\[A(e^{i\varphi}K)=\frac{1}{k_{12}}\left(\begin{array}{cc}
   k_{11} & -e^{-i\varphi} \\
  -e^{i\varphi}& k_{22} \\
\end{array}\right).\]
\end{proposition}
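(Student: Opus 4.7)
The plan is to reduce the coupled boundary condition to the normal form $\dot{\psi}=A\psi$ of (\ref{Cay1}) with the Hermitian matrix $A$ claimed in the statement, and then invoke the Cayley correspondence (\ref{Cay}) to conclude that the underlying $U$ lies in $\mathcal{U}_1$.

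First I would translate (\ref{B2}) into the $(\psi,\dot{\psi})$ convention. Since $p\equiv 1$ and the outward normal derivatives satisfy $\dot{y}(0)=-y'(0)$ and $\dot{y}(1)=y'(1)$, the two scalar equations in (\ref{B2}) become
\[
y(1)=e^{i\varphi}\bigl(k_{11}y(0)-k_{12}\dot{y}(0)\bigr),\qquad
\dot{y}(1)=e^{i\varphi}\bigl(k_{21}y(0)-k_{22}\dot{y}(0)\bigr).
\]
Because $k_{12}\neq 0$, the first of these can be solved explicitly for $\dot{y}(0)$ as a linear combination of $y(0)$ and $y(1)$. Substituting the result into the second equation and collecting terms, the coefficient of $y(0)$ simplifies via the unimodularity $k_{11}k_{22}-k_{12}k_{21}=1$ to $-e^{i\varphi}/k_{12}$. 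The two scalar relations then assemble into the matrix identity $\dot{\psi}=A\psi$, with $A$ exactly the matrix displayed in the proposition; the fact that $k_{11},k_{12},k_{22}$ are real makes $A$ manifestly Hermitian.

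Having brought the boundary condition into the form (\ref{Cay1}), the Cayley transform (\ref{Cay}) is invertible on Hermitian matrices, producing a unique $U\in\mathrm{U}(2)$ satisfying $A=-i(I+U)^{-1}(I-U)$; by construction this $U$ satisfies $\det(I+U)\neq 0$, i.e.\ $U\in\mathcal{U}_1$, and represents the same boundary condition as $(\varphi,K)$. There is essentially no obstacle here beyond careful bookkeeping of the sign conventions for $\dot{\psi}$ and one application of $\det K=1$ to produce the symmetric off-diagonal entries $-e^{\pm i\varphi}/k_{12}$; the rest is a direct linear-algebra computation.
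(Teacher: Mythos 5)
Your proposal is correct and follows essentially the same route as the paper: the paper's proof simply rewrites (\ref{B2}) as $\dot{\psi}=A\psi$ and compares with (\ref{Cay1}), and your computation (solving the first equation for $\dot{y}(0)$, substituting into the second, and using $\det K=1$ to get the off-diagonal entry $-e^{i\varphi}/k_{12}$) is just the explicit bookkeeping behind that one-line rewriting. Your closing observation that the inverse Cayley transform of a Hermitian $A$ automatically yields $U$ with $\det(I+U)\neq 0$ is likewise exactly the $1$--$1$ correspondence between $\mathcal{U}_1$ and Hermitian matrices that the paper invokes implicitly.
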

\begin{proof}If $k_{12}\neq 0$, the boundary condition Eq.~(\ref{B2}) can be rewritten as
\[\dot{\psi}=\left(\begin{array}{cc}
   k_{11}/k_{12} & -e^{-i\varphi}/k_{12} \\
  -e^{i\varphi}/k_{12} & k_{22}/k_{12} \\
\end{array}\right)\psi.\]
Comparing this with Eq.~(\ref{Cay1}), we come to the conclusion.
\end{proof}
It is not hard to see that if $k_{12}=0$, the corresponding boundary condition cannot be rewritten as Eq.~(\ref{Cay1}) and so the corresponding $U(e^{i\varphi}K)\in \mathcal{U}_0$. However, from the above proposition, we can obtain a unified expression of $U(e^{i\varphi}K)$ no matter whether $k_{12}=0$ or not:
\begin{proposition}
For the coupled boundary condition (\ref{B2}), the corresponding element $U(e^{i\varphi}K)\in \mathrm{U}(2)$ is
\[\frac{1}{k_{12}-k_{21}+i(k_{11}+k_{22})}\left(\begin{array}{cc}
  k_{12}+k_{21}+i(k_{22}-k_{11}) & 2ie^{-i\varphi} \\
  2ie^{i\varphi} & k_{12}+k_{21}-i(k_{22}-k_{11}) \\
\end{array}\right),\]
and it has $-1$ as its eigenvalue if and only if $k_{12}=0$.
\end{proposition}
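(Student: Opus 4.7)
The plan is to derive the formula directly from the canonical equation (\ref{U1}), without passing through Proposition~\ref{CH} and without splitting into cases. Rewrite the coupled boundary condition (\ref{B2}) as a linear system in $\psi,\dot{\psi}$ of the form $B\psi+D\dot{\psi}=0$ with
\[B=\begin{pmatrix} -e^{i\varphi}k_{11} & 1 \\ -e^{i\varphi}k_{21} & 0 \end{pmatrix},\qquad D=\begin{pmatrix} e^{i\varphi}k_{12} & 0 \\ e^{i\varphi}k_{22} & 1 \end{pmatrix},\]
and compare this with the rearranged form $(U-I)\psi+i(I+U)\dot{\psi}=0$ of (\ref{U1}). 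Since the two equations describe the same $2$-plane in $\mathbb{C}^{4}$, the row-block matrices $(B\mid D)$ and $(U-I\mid i(I+U))$ must be related by left multiplication by an invertible $2\times 2$ matrix $M$; eliminating $M$ from $U-I=MB$ and $i(I+U)=MD$ yields the clean quotient $U=(D-iB)^{-1}(D+iB)$, valid uniformly whether or not $k_{12}=0$.

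The next step is to check invertibility of $D-iB$ and to carry out the matrix product. A direct computation gives $\det(D-iB)=e^{i\varphi}[k_{12}-k_{21}+i(k_{11}+k_{22})]$, which is nonzero on $\mathrm{SL}(2,\mathbb{R})$: a vanishing denominator would force $k_{11}=-k_{22}$ and $k_{12}=k_{21}$, giving $\det K=-k_{11}^{2}-k_{12}^{2}\le 0$, contrary to $\det K=1$. Expanding $(D-iB)^{-1}(D+iB)$, the off-diagonal entries of the numerator produce precisely the combination $k_{11}k_{22}-k_{12}k_{21}$, which equals $1$ by $\det K=1$ and collapses them to $2ie^{\pm i\varphi}$, while the diagonal entries assemble as $k_{12}+k_{21}\pm i(k_{22}-k_{11})$. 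This reproduces the claimed formula.

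For the eigenvalue assertion, compute $\det(I+U)$ directly from the explicit formula. Writing $c:=k_{12}-k_{21}+i(k_{11}+k_{22})$, one checks that the matrix $c(I+U)$ has diagonal entries $2(k_{12}+ik_{22})$ and $2(k_{12}+ik_{11})$ and off-diagonal entries $2ie^{\pm i\varphi}$, whence
\[\det(I+U)=\frac{4}{c^{2}}\bigl[(k_{12}+ik_{11})(k_{12}+ik_{22})+1\bigr];\]
expanding the product and invoking $\det K=1$ once more collapses the bracket to $k_{12}\cdot c$, so $\det(I+U)=4k_{12}/c$, which vanishes iff $k_{12}=0$. The main obstacle is organizational rather than conceptual: the calculation uses $\det K=1$ three separate times, and each invocation must be deployed at exactly the moment it simplifies an expression, or the intermediate polynomials in the $k_{ij}$ grow out of control.
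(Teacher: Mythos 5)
Your proof is correct, and it takes a genuinely different route from the paper's. The paper splits into cases: for $k_{12}\neq 0$ it invokes Prop.~\ref{CH} to get the Hermitian matrix $A(e^{i\varphi}K)$ and then applies the inverse Cayley transform $U=[I-iA][I+iA]^{-1}$ from Eq.~(\ref{Cay}) to produce the displayed formula, after which the case $k_{12}=0$ is dispatched with the one-line remark that the expression ``extends smoothly'' --- an argument that tacitly relies on continuity of the assignment $e^{i\varphi}K\mapsto U(e^{i\varphi}K)$, which is plausible but not proved there. You instead encode both (\ref{B2}) and (\ref{U1}) as full-rank $2\times 4$ row blocks $(B\mid D)$ and $(U-I\mid i(I+U))$ cutting out the same $2$-plane of boundary data, eliminate the intertwiner $M$ (explicitly $M(D-iB)=2iI$, hence $U=(D-iB)^{-1}(D+iB)$), and obtain the formula in one stroke, uniformly in $K$ --- no case split, no limiting argument; this is more self-contained and quietly repairs the paper's hand-waved extension step. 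I verified your algebra: $\det(D-iB)=e^{i\varphi}[k_{12}-k_{21}+i(k_{11}+k_{22})]$, the numerator entries come out as claimed, and $c(I+U)$ has diagonal entries $2(k_{12}+ik_{22})$, $2(k_{12}+ik_{11})$, giving $\det(I+U)=4k_{12}/c$, which is a sharper form of the paper's equivalence $\det[I+U(e^{i\varphi}K)]=0 \Leftrightarrow k_{12}[k_{12}-k_{21}+i(k_{22}+k_{11})]=0$; both use the same nonvanishing of $c$ on $\mathrm{SL}(2,\mathbb{R})$. Two cosmetic points: only the $(2,1)$ entry of the numerator actually needs $\det K=1$ (the $(1,2)$ entry is $2i$ automatically), and the advance check that $D-iB$ is invertible is redundant, since the relation $M(D-iB)=2iI$ forces it --- though your determinant computation usefully doubles as the proof that the displayed denominator never vanishes.
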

\begin{proof}
If $k_{12}\neq 0$, then from Prop.~\ref{CH}, $U(e^{i\varphi}K)\in \mathcal{U}_1$ and
\[U(e^{i\varphi}K)=[I-iA(e^{i\varphi}K)][I+iA(e^{i\varphi}K)]^{-1}\]
due to Eq.~(\ref{Cay}). This leads to the expression as required. Obviously this expression extends smoothly to the case $k_{12}=0$.

That $\det [I+U(e^{i\varphi}K)]=0$ is equivalent to
\[k_{12}[k_{12}-k_{21}+i(k_{22}+k_{11})]=0,\]
which holds if and only if $k_{12}=0$.
\end{proof}
For the case $k_{12}=0$, we also have
\begin{proposition}
If in the coupled case $k_{12}=0$, then in terms of $r, \beta, \gamma$, the matrix $e^{i\varphi}K$ is determined by (without loss of generality, we set $k_{11}>0$)
\[k_{11}=\frac{\sqrt{1-r^2\cos^2 \beta}+r\sin \beta}{\sqrt{1-r^2}},\]
\[k_{21}=\frac{-2r\cos \beta}{\sqrt{1-r^2}},\]
\[e^{i\varphi}=e^{-i(\gamma+\frac{\pi}{2})}.\]
\end{proposition}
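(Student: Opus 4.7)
The plan is to leverage the unified formula for $U(e^{i\varphi}K)$ from the preceding proposition. Since $\det K=1$ and $k_{12}=0$, we have $k_{22}=1/k_{11}$, so that formula reduces to a matrix expression in only the three unknowns $k_{11}, k_{21}, \varphi$. The task then becomes matching this matrix, entry by entry, against the $\mathcal{U}_0$-parameterization
\[e^{i\theta}\left(\begin{array}{cc} re^{i\beta} & \sqrt{1-r^2}\,e^{i\gamma} \\ -\sqrt{1-r^2}\,e^{-i\gamma} & re^{-i\beta} \end{array}\right),\]
with $\theta$ supplied by (\ref{th}).

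The pivotal step is simplifying the common denominator $D:=-k_{21}+i(k_{11}+k_{22})$. Using the claimed formulas for $k_{11}, k_{21}$ together with $k_{22}=1/k_{11}$, a short calculation (resting on the conjugate-pair identity $(\sqrt{1-r^2\cos^2\beta})^2-(r\sin\beta)^2=1-r^2$) gives $k_{11}+k_{22} = 2\sqrt{1-r^2\cos^2\beta}/\sqrt{1-r^2}$ and $-k_{21}=2r\cos\beta/\sqrt{1-r^2}$. Combined with (\ref{th}) rewritten as $-e^{-i\theta} = r\cos\beta + i\sqrt{1-r^2\cos^2\beta}$, this yields the clean identity $D = -2e^{-i\theta}/\sqrt{1-r^2}$.

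With $D$ in this form, each of the four entry checks is a one-liner. For the diagonals, $k_{21}\pm i(k_{22}-k_{11}) = -2re^{\pm i\beta}/\sqrt{1-r^2}$, so division by $D$ returns $e^{i\theta}re^{\pm i\beta}$. For the off-diagonals, the claimed $e^{i\varphi}=e^{-i(\gamma+\pi/2)}$ amounts to $e^{-i\varphi}=ie^{i\gamma}$, and then $2ie^{-i\varphi}/D = e^{i\theta}\sqrt{1-r^2}\,e^{i\gamma}$ (and similarly $2ie^{i\varphi}/D = -e^{i\theta}\sqrt{1-r^2}\,e^{-i\gamma}$). The convention $k_{11}>0$ is what selects this particular branch, since replacing $(K, e^{i\varphi})$ by $(-K, -e^{i\varphi})$ produces the same SA condition but negates $k_{11}$. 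The main obstacle is the denominator identity for $D$; every subsequent step is routine entry comparison.
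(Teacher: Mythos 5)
Your proposal is correct, but it runs in the opposite direction from the paper's argument. The paper derives the formulas forward: it substitutes the $\mathcal{U}_0$-parameterization of $U$ into the defining relation Eq.~(\ref{U1}), solves to get the explicit boundary system expressing $y(1)$ and $y'(1)$ in terms of $y(0)$ and $y'(0)$, and then reads off $k_{11}$, $k_{21}$ and $e^{i\varphi}$ (and incidentally $k_{22}$) by direct comparison with Eq.~(\ref{B2}). You instead verify the stated formulas backward through the unified matrix expression for $U(e^{i\varphi}K)$ from the preceding proposition, specialized to $k_{12}=0$, $k_{22}=1/k_{11}$; your computation checks out, including the key denominator identity $D=-2e^{-i\theta}/\sqrt{1-r^2}$ (which rests on Eq.~(\ref{th}) and the conjugate-pair identity giving $k_{11}+k_{22}=2\sqrt{1-r^2\cos^2\beta}/\sqrt{1-r^2}$), the diagonal matches $e^{i\theta}re^{\pm i\beta}$, and the off-diagonal matches forcing $e^{-i\varphi}=ie^{i\gamma}$. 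Your handling of the sign ambiguity is also right: $(K,e^{i\varphi})\mapsto(-K,-e^{i\varphi})$ fixes the matrix $e^{i\varphi}K$ but flips the sign of $k_{11}$, and the normalization $k_{11}>0$ selects the branch (indeed automatic for your formula once $r<1$, which holds since $r=1$ would make $U$ diagonal, i.e.\ separated). What your route buys is a purely matrix-algebraic check that never manipulates Eq.~(\ref{U1}) and makes the normalization transparent; what it costs is that it presupposes the candidate formulas (acceptable when proving a stated proposition), implicitly invokes injectivity of the correspondence between boundary conditions and elements of $\mathrm{U}(2)$ to conclude the parameters are \emph{determined}, and inherits from the preceding proposition its extension of the unified formula to $k_{12}=0$, a step the paper justifies only by smoothness, whereas the paper's forward derivation is self-contained on that point.
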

\begin{proof}
In terms of $r, \beta, \gamma$, the associated boundary condition is
\begin{equation*}\left \{
\begin{array}{ll}
y(1)=\frac{\sqrt{1-r^2\cos^2 \beta}+r \sin \beta}{\sqrt{1-r^2}}e^{-i(\gamma+\frac{\pi}{2})}y(0),\\
 y'(1)=-\frac{2r\cos \beta}{\sqrt{1-r^2}}e^{-i(\gamma+\frac{\pi}{2})}y(0)+\frac{\sqrt{1-r^2\cos^2 \beta}-r \sin \beta}{\sqrt{1-r^2}}e^{-i(\gamma+\frac{\pi}{2})}y'(0).\end{array}
 \right.
 \end{equation*}
 Comparing this with Eq.~(\ref{B2}), we get the conclusion.
\end{proof}
$\mathcal{U}_0$ has been investigated in other way in the literature. From the above discussion, it is easy to see that $\mathcal{U}_0$ is actually the set $\mathscr{J}^\mathbb{C}$  in \cite{Kong1}.\\
\noindent \emph{Remark}. No matter whether $k_{12}=0$ or not, the eigenvalues of $U(e^{i\varphi}K)$ are independent of $\varphi$. So for a fixed $K$, $U(e^{i\varphi}K),\varphi\in [0,2\pi)$ all lie in the same adjoint orbit, tracing out a circle. There is a beautiful inequality among eigenvalues of S-L problems when the boundary condition varies only on this circle \cite{Eas}.
\subsection{Adjoint orbits}
Let $\mathrm{H}\subset \mathrm{U}(2)$ be the Cartan subgroup as in the last subsection, and $W$ ($\cong\mathbb{Z}_2$) the corresponding Weyl group. Then the quotient $\mathrm{H}/W$ is a 2-dimensional manifold with boundary--in fact, it is topologically the famous Mobius strip. $\mathrm{H}/W$ can be viewed as the space of adjoint orbits in $\mathrm{U}(2)$, with each interior point representing an adjoint orbit of principal type and with each point on the boundary representing an adjoint orbit consisting of a single matrix. In this sense, a generic adjoint orbit is diffeomorphic to $S^2$. Let $\Pi: \mathrm{U}(2)\rightarrow \mathrm{H}/W$ be the quotient map. We refer the reader to \cite{Sep} for the basics of compact Lie group.

Since both $\mathcal{U}_0$ and $\mathcal{U}_1$ are invariant under the adjoint action, an adjoint orbit would lie entirely either in $\mathcal{U}_0$ or $\mathcal{U}_1$. This, of course, leads to a more refined classification of SA boundary conditions--each adjoint orbit represents a subclass. In this subsection, we mainly consider orbits of principal type. These are in fact real analytic 2-dimensional manifolds.

By (\ref{Cay}), $\mathcal{U}_1$ is diffeomorphic to the space $\mathcal{M}$ of $2\times 2$ Hermitian matrices and, the adjoint action of $\mathrm{U}(2)$ on $\mathcal{U}_1$ corresponds to the one on $\mathcal{M}$. This way, we can identify adjoint orbits in $\mathcal{U}_1$ with adjoint orbits in $\mathcal{M}$. An adjoint orbit $\mathcal{O}\subset \mathcal{M}$ is characterized by its eigenvalues $\zeta_1>\zeta_2$. Let $\mu=\frac{\zeta_1+\zeta_2}{2}$, $\nu=\frac{\zeta_1-\zeta_2}{2}$ and denote the adjoint orbit by $\mathcal{O}_{\mu,\nu}$.

\begin{proposition}A general element in $\mathcal{O}_{\mu,\nu}$ is of the following form:
\[A=\left( \begin{array}{cc}
\mu-\nu\cos 2\theta & \nu \sin 2\theta \cdot e^{-i\gamma}\\
\nu \sin 2\theta\cdot e^{i\gamma} & \mu+\nu\cos 2\theta \\
 \end{array}
    \right),\quad\quad \gamma\in [0, 2\pi), \quad \theta\in [0, \frac{\pi}{2}].\]
\end{proposition}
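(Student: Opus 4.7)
The plan is to use the spectral theorem for Hermitian matrices. Since every $A\in\mathcal{O}_{\mu,\nu}$ has the two distinct eigenvalues $\zeta_1=\mu+\nu$ and $\zeta_2=\mu-\nu$, the spectral decomposition gives
\[
A=\zeta_1 P_1+\zeta_2 P_2=\mu I+\nu(P_1-P_2)=\mu I+\nu(2P_1-I),
\]
where $P_1,P_2$ are the orthogonal rank-one spectral projections, satisfying $P_1+P_2=I$. So parametrizing $\mathcal{O}_{\mu,\nu}$ reduces to parametrizing the rank-one orthogonal projections on $\mathbb{C}^2$, i.e.\ the complex projective line $\mathbb{CP}^1\cong S^2$. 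This is exactly the topological statement already recorded above (orbits of principal type are diffeomorphic to $S^2$), so all that remains is to exhibit explicit angular coordinates.

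Any rank-one orthogonal projection is $P_1=vv^*$ for some unit vector $v\in\mathbb{C}^2$, unique up to a $\mathrm{U}(1)$-phase. I would use the phase freedom to arrange that the second component of $v$ is real and non-negative, and then write
\[
v=\begin{pmatrix}\sin\theta\,e^{-i\gamma}\\ \cos\theta\end{pmatrix},\qquad \theta\in[0,\tfrac{\pi}{2}],\ \gamma\in[0,2\pi).
\]
This is the standard spherical-coordinate chart on $\mathbb{CP}^1$, surjective onto the set of rank-one projections (the values $\theta=0$ and $\theta=\pi/2$ each collapse the $\gamma$-circle to a single point, giving the two poles of $S^2$).

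I would then compute $P_1=vv^*$ directly, obtaining
\[
P_1=\begin{pmatrix}\sin^2\theta & \sin\theta\cos\theta\,e^{-i\gamma}\\ \sin\theta\cos\theta\,e^{i\gamma} & \cos^2\theta\end{pmatrix},
\]
plug into $A=\mu I+\nu(2P_1-I)$, and apply the double-angle identities $2\sin^2\theta-1=-\cos 2\theta$, $2\cos^2\theta-1=\cos 2\theta$, and $2\sin\theta\cos\theta=\sin 2\theta$ to recover exactly the form in the proposition.

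There is no real obstacle in this argument; it is a direct consequence of the spectral theorem together with the standard parametrization of $\mathbb{CP}^1$. The only technical points needing care are (i) choosing the phase convention so that the second coordinate of $v$ is the real $\cos\theta$, which fixes the particular arrangement of $e^{\pm i\gamma}$ and the sign pattern $\mp\cos 2\theta$ on the diagonal, and (ii) keeping track of which eigenvalue is associated with which projection, so that the identification $A=\mu I+\nu(2P_1-I)$ (rather than $A=\mu I-\nu(2P_1-I)$) produces the stated signs.
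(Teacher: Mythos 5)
Your proposal is correct: the computation of $P_1=vv^*$ and the double-angle reductions check out, and the phase normalization (second component of $v$ real and non-negative) does make the chart surjective, with the $\theta=0,\pi/2$ degeneracies handled as you note. The route differs from the paper's in organization, though not in substance. The paper writes a general element of $\mathcal{O}_{\mu,\nu}$ as $g\,\mathrm{diag}(\mu-\nu,\mu+\nu)\,g^{-1}$ with $g=\left(\begin{smallmatrix} a & b\\ -\bar b & \bar a\end{smallmatrix}\right)\in\mathrm{SU}(2)$, uses the freedom of right-multiplying $g$ by diagonal elements of $\mathrm{SU}(2)$ (which stabilize the diagonal base point) to require $a\ge 0$, and then sets $a=\cos\theta$, $b=\sin\theta\, e^{-i\gamma}$. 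Your unit vector $v=(\sin\theta\, e^{-i\gamma},\cos\theta)^t$ is precisely the second column of that $g$, so the two computations are literally the same matrix product viewed differently. What your spectral-theorem phrasing buys: surjectivity is immediate (every Hermitian matrix with eigenvalues $\mu\pm\nu$ has a spectral decomposition, so the reduction from $\mathrm{U}(2)$-conjugation to $\mathrm{SU}(2)$ with normalized $a$ never needs to be argued), and the identification of the orbit with rank-one projections makes the diffeomorphism $\mathcal{O}_{\mu,\nu}\cong\mathbb{CP}^1\cong S^2$ explicit rather than implicit. What the paper's phrasing buys: it stays within the adjoint-action language used throughout Section 2, which matters there since the same conjugation picture is reused verbatim for the orbits $\mathcal{O}_\alpha\subset\mathcal{U}_0$, where the matrices are unitary rather than Hermitian and your projection decomposition would need the analogous unitary form $U=\zeta_1 P_1+\zeta_2 P_2$ instead.
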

\begin{proof}
$\mathcal{O}_{\mu,\nu}$ is the adjoint orbit through $\left( \begin{array}{cc}
\mu-\nu & 0\\
0 & \mu+\nu \\
 \end{array}
    \right)$. Then each element in $\mathcal{O}_{\mu,\nu}$ can be represented by
   \[\left(
                                    \begin{array}{cc}
                                      a & b \\
                                      -\bar{b} & \bar{a} \\
                                    \end{array}
                                  \right)\left( \begin{array}{cc}
\mu-\nu & 0\\
0 & \mu+\nu \\
 \end{array}
    \right)\left(
                                    \begin{array}{cc}
                                      a & b \\
                                      -\bar{b} & \bar{a} \\
                                    \end{array}
                                  \right)^{-1},\]
                                  for some
    $\left(\begin{array}{cc}
                                      a & b \\
                                      -\bar{b} & \bar{a} \\
                                    \end{array}
                                  \right)\in \mathrm{SU}(2)$. We can even further require $a\geq 0$. Setting $a=\cos \theta$, $\theta\in [0, \frac{\pi}{2}]$,
 and $b=\sin \theta \cdot e^{-i\gamma}$ then leads to the representation.
                                  %\left(
                                   % \begin{array}{cc}
                                   %   \mu+(|b|^2-|a|^2)\nu & 2ab\nu \\
                                   %   2\bar{a}\bar{b}\nu &  \mu-(|b|^2-|a|^2)\nu \\
                                   % \end{array}
                                 % \right).\]
\end{proof}
\noindent \emph{Remark}. From Prop.~\ref{CH}, we can see that $\gamma$ essentially contains the same geometric content as $\varphi$ in Eq.~(\ref{B2}). Note that $\theta=0,\frac{\pi}{2}$ actually correspond to the only two separated boundary conditions in $\mathcal{O}_{\mu,\nu}$. It is easy to find that in $\mathcal{O}_{\mu,\nu}$, real boundary conditions lie precisely on the circle formed by the two semi-circles $\gamma=0$ and $\gamma=\pi$. It will soon be clear that this is a general property of orbits of principal type.

An adjoint orbit in $\mathcal{U}_0$ is determined by the other eigenvalue $e^{i\alpha}$ ($\alpha\in[0,\pi)\cup (\pi, 2\pi)$) besides $-1$. We shall denote the orbit by $\mathcal{O}_{\alpha}$. Since in this case,
\[\textup{tr}U=e^{i\theta}(a+\bar{a})=-1+e^{i\alpha},\]
we find
$\Re a=\sin \frac{\alpha}{2}$ if $\alpha\in [0, \pi)$ and, $\Re a=-\sin \frac{\alpha}{2}$ if $\alpha\in (\pi, 2\pi)$.

\begin{proposition}For $\alpha\in [0, \pi)$, a general element of $\mathcal{O}_\alpha$ is of the following form:
\[U=ie^{i\alpha/2}\left(\begin{array}{cc}
 \sin\frac{\alpha}{2}+i t & \sqrt{\cos^2\frac{\alpha}{2}-t^2}e^{i\gamma} \\
  -\sqrt{\cos^2\frac{\alpha}{2}-t^2}e^{-i\gamma}& \sin\frac{\alpha}{2}-i t \\
 \end{array}
 \right),\]
 where $t\in [-\cos \frac{\alpha}{2}, \cos \frac{\alpha}{2}]$ and $\gamma\in [0, 2\pi)$.

 For $\alpha\in (\pi, 2\pi)$, a general element of $\mathcal{O}_\alpha$ is of the following form:
 \[U=-ie^{i\alpha/2}\left(\begin{array}{cc}
 -\sin\frac{\alpha}{2}+i t & \sqrt{\cos^2\frac{\alpha}{2}-t^2}e^{i\gamma} \\
  -\sqrt{\cos^2\frac{\alpha}{2}-t^2}e^{-i\gamma}& -\sin\frac{\alpha}{2}-i t \\
 \end{array}
 \right),\]
 where $t\in [\cos \frac{\alpha}{2}, -\cos \frac{\alpha}{2}]$ and $\gamma\in [0, 2\pi)$.
 \end{proposition}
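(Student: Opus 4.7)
The plan is to work directly from the parameterization of $\mathcal{U}_0$ set up earlier in this subsection, namely that every $U\in\mathcal{U}_0$ has the form $U=e^{i\theta}M$ with $M=\left(\begin{smallmatrix} a & b \\ -\bar b & \bar a\end{smallmatrix}\right)\in\mathrm{SU}(2)$, $a=re^{i\beta}$, $|a|^2+|b|^2=1$, and $e^{i\theta}=-r\cos\beta+i\sqrt{1-r^2\cos^2\beta}$. Since two unitary matrices are conjugate iff they share the same spectrum, cutting out $\mathcal{O}_\alpha$ amounts to imposing the trace condition $\mathrm{tr}(U)=-1+e^{i\alpha}$. Using $\mathrm{tr}(U)=2e^{i\theta}\Re a$ and the identity $-1+e^{i\alpha}=2\sin(\alpha/2)\bigl(-\sin(\alpha/2)+i\cos(\alpha/2)\bigr)$, the sign-normalization built into the convention $\Im e^{i\theta}\geq 0$ then forces the values of $\Re a$ already derived in the paper.

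For $\alpha\in[0,\pi)$ we have $\cos(\alpha/2)\geq 0$, so from $\Re a=\sin(\alpha/2)$ I get $r^2\cos^2\beta=\sin^2(\alpha/2)$, hence $\sqrt{1-r^2\cos^2\beta}=\cos(\alpha/2)$ and $e^{i\theta}=-\sin(\alpha/2)+i\cos(\alpha/2)=ie^{i\alpha/2}$. The remaining freedom lives in $\Im a$ and the phase of $b$: setting $t:=\Im a$ gives $a=\sin(\alpha/2)+it$, and the unimodularity of $M$ yields $|b|^2=1-|a|^2=\cos^2(\alpha/2)-t^2$, which forces $t\in[-\cos(\alpha/2),\cos(\alpha/2)]$ and allows us to write $b=\sqrt{\cos^2(\alpha/2)-t^2}\,e^{i\gamma}$. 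Substituting into $e^{i\theta}M$ produces exactly the first displayed matrix.

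The case $\alpha\in(\pi,2\pi)$ is structurally identical, the only subtlety being the signs: here $\alpha/2\in(\pi/2,\pi)$ so $\cos(\alpha/2)<0$, and one gets $\sqrt{1-r^2\cos^2\beta}=-\cos(\alpha/2)$, hence $e^{i\theta}=\sin(\alpha/2)-i\cos(\alpha/2)=-ie^{i\alpha/2}$, together with $a=-\sin(\alpha/2)+it$. The admissible $t$-interval flips accordingly to $[\cos(\alpha/2),-\cos(\alpha/2)]$, and substitution gives the second displayed formula. To close the loop, a direct computation of $\det U$ (combined with the pinned-down trace) confirms that the constructed matrices have eigenvalues $-1$ and $e^{i\alpha}$, so they genuinely sweep out all of $\mathcal{O}_\alpha$ rather than merely sit inside $\mathcal{U}_0$.

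The only real obstacle is this sign bookkeeping in the second case, where the sign conventions on $\sqrt{1-r^2\cos^2\beta}$ (positive by definition) and $\cos(\alpha/2)$ (negative here) have to be reconciled carefully; otherwise the argument is a direct unpacking of the formulas of Section~2.1 and mirrors the parameterization strategy used in the preceding proposition for $\mathcal{O}_{\mu,\nu}$.
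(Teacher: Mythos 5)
Your proof is correct and follows essentially the same route as the paper's: both use the decomposition $U=e^{i\theta}M$ with $M\in\mathrm{SU}(2)$, the trace condition $\mathrm{tr}\,U=-1+e^{i\alpha}$ pinning down $\Re a=\pm\sin\frac{\alpha}{2}$, and Eq.~(\ref{th}) to evaluate $e^{i\theta}=\pm ie^{i\alpha/2}$ in the two cases, after which $t=\Im a$ and the phase of $b$ supply the parameterization. Your additional details (the sign analysis via $\Im e^{i\theta}\geq 0$, the computation $|b|^2=\cos^2\frac{\alpha}{2}-t^2$ giving the $t$-interval, and the closing spectral check that the matrices sweep out all of $\mathcal{O}_\alpha$) simply make explicit what the paper compresses into ``the conclusion then easily follows.''
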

 \begin{proof} By Eq.~(\ref{th}),
 \[e^{i\theta}= -\sin\frac{\alpha}{2}+i\cos \frac{\alpha}{2}=i e^{i\alpha/2}\]
 if $\alpha\in[0, \pi)$, and
 \[e^{i\theta}= \sin\frac{\alpha}{2}-i\cos \frac{\alpha}{2}=-i e^{i\alpha/2}\]
 if $\alpha\in (\pi, 2\pi)$. The conclusion then easily follows.
 \end{proof}

 \section{The characteristic curve}
 \label{sec3}
The characteristic curve $\Gamma: \mathbb{R}\rightarrow \mathrm{U}(2)$ is a parameterized curve, the image of which consists of all SA boundary conditions having a double eigenvalue. This curve contains all information concerning eigenvalues of SA boundary conditions (of course, if one puts eigenfunctions aside) \cite{Kong2}.

From (\ref{U1}), it is easy to find that $\Gamma$ is of the following form :
\[\Gamma(\lambda)=\frac{1}{y_2-\dot{y}_1+i\dot{y}_2+iy_1}\left(
                                    \begin{array}{cc}
                                      y_2+\dot{y}_1+i\dot{y}_2-iy_1 & 2i \\
                                      2i & y_2+\dot{y}_1-i\dot{y}_2+iy_1\\
                                    \end{array}
                                  \right),\]
where $\lambda\in \mathbb{R}$. The image of $\Gamma$ is completely included in $\mathcal{U}^{\mathbb{R}}$. $\Pi\circ \Gamma$ is a curve in $\mathrm{H}/W$. We call $\Pi\circ \Gamma$ the \emph{induced curve} of $\Gamma$. $\Pi\circ \Gamma$ is characterized by the two eigenvalues of $\Gamma(\lambda)$, say,
\[\kappa_\pm(\lambda)=\frac{y_2+\dot{y}_1\pm i\sqrt{4+(\dot{y}_2-y_1)^2}}{y_2-\dot{y}_1+i\dot{y}_2+i y_1}.\]

\begin{proposition}\label{curve}In terms of $\Gamma(\lambda)$, the characteristic equation for an SA boundary condition $U$ can be written in the following form:
\begin{equation}\det(U-\Gamma(\lambda))=0. \label{ce}\end{equation}
The subset $S_\lambda\subset \mathcal{U}$ of boundary conditions with $\lambda$ as an eigenvalue is diffeomorphic to $\mathcal{U}_0$.
\end{proposition}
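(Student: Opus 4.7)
The plan is to translate the boundary condition into a matrix identity involving the fundamental solutions $y_1, y_2$, and then to recognize $\Gamma(\lambda)$ as the object that naturally appears.

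First, rewrite Eq.~(\ref{U1}) in the equivalent form $U w_- = w_+$, where $w_\pm := \psi \mp i\dot{\psi} \in \mathbb{C}^2$. A solution of $ly = \lambda y$ has the form $y = c_1 y_1 + c_2 y_2$, so $w_\pm$ depend linearly on $(c_1, c_2)^t$: write $w_\pm = M_\pm(\lambda)\,(c_1, c_2)^t$, where $M_\pm$ are explicit $2\times 2$ matrices whose entries are built from $y_j(1)$ and $y_j'(1)$. The existence of a nontrivial $(c_1,c_2)$ with $U w_- = w_+$ is therefore equivalent to $\det(UM_- - M_+)=0$, which is the characteristic equation in disguise.

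Second, identify $\Gamma(\lambda) = M_+ M_-^{-1}$. By the defining property of the characteristic curve, $\Gamma(\lambda)$ is the SA boundary condition having $\lambda$ as a double eigenvalue, i.e.\ both $y_1$ and $y_2$ themselves satisfy the associated boundary condition. Applying the reformulation of the first step to $U=\Gamma(\lambda)$ with the two column vectors $(1,0)^t$ and $(0,1)^t$ separately yields the matrix identity $\Gamma(\lambda) M_- = M_+$ columnwise. Hence $UM_- - M_+ = (U-\Gamma(\lambda))M_-$, and provided $\det M_-(\lambda)\neq 0$, the characteristic equation reduces to $\det(U-\Gamma(\lambda))=0$. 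The nonvanishing of $\det M_-$ follows from the constancy of the Wronskian $y_1 y_2' - y_1' y_2 \equiv 1$: if both the real and imaginary parts of $\det M_-(\lambda)$ vanished, the Wronskian evaluated at $x=1$ would force $y_1(1)^2 + y_1'(1)^2 = -1$, which is impossible. (Equivalently, the very fact that the formula for $\Gamma(\lambda)$ given in the excerpt has a nonvanishing denominator already records this.)

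Third, describe $S_\lambda$. The map $\Phi_\lambda : V \mapsto -V\Gamma(\lambda)$ is right multiplication by the unitary $-\Gamma(\lambda)$ and hence a diffeomorphism of $\mathrm{U}(2)$ onto itself. The identity $\Phi_\lambda(V) - \Gamma(\lambda) = -(V+I)\Gamma(\lambda)$ gives $\det(\Phi_\lambda(V)-\Gamma(\lambda)) = \det(V+I)\det \Gamma(\lambda)$, so $\Phi_\lambda(V)\in S_\lambda$ iff $\det(V+I)=0$, i.e.\ $V\in \mathcal{U}_0$. Therefore $\Phi_\lambda$ restricts to a diffeomorphism $\mathcal{U}_0 \cong S_\lambda$.

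The only nontrivial step is the identification $\Gamma(\lambda) M_- = M_+$. Taking the double-eigenvalue property of $\Gamma$ as input makes this essentially automatic; a purely algebraic verification would instead require matching the four entries of $M_+ M_-^{-1}$ against the explicit formula for $\Gamma(\lambda)$, which is routine bookkeeping with the Wronskian identity rather than a genuine obstacle.
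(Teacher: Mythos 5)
Your proposal is correct and takes essentially the same route as the paper: rewriting Eq.~(\ref{U1}) as $Uw_-=w_+$, factoring the characteristic determinant as $\det(UM_--M_+)=\det\bigl(U-\Gamma(\lambda)\bigr)\det M_-$ with $\Gamma(\lambda)=M_+M_-^{-1}$, and exhibiting $S_\lambda$ as a translate of $\mathcal{U}_0$ by the unitary $-\Gamma(\lambda)$ (you translate on the right, the paper on the left, which is immaterial). Your write-up merely supplies details the paper's terse proof leaves implicit, namely the invertibility of $M_-$ via the Wronskian identity and the columnwise verification that $\Gamma(\lambda)M_-=M_+$.
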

\begin{proof}
$\mathrm{U}(2)$ acts on itself by left translation and $S_\lambda$ can be represented as $-\Gamma(\lambda)\mathcal{U}_0$. By Eq.~(\ref{ce}), $S_\lambda$ is diffeomorphic to $\mathcal{U}_0$, i.e. a 3-sphere with 2 points glued together. This observation was already noted in \cite{Kong2}, but in a more complicated language.
\end{proof}
\begin{corollary}The matrix $\Gamma(\lambda)$ has $-1$ as an eigenvalue if and only if $\lambda=\lambda_n^D$ for some $n\in \mathbb{N}$. Therefore, the characteristic curve $\Gamma$ intersects $\mathcal{U}_0$ countably infinite times.
\end{corollary}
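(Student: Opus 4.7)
The plan is to recognize the corollary as a direct application of the characteristic equation of Proposition~\ref{curve} to the Dirichlet problem, with essentially no additional input beyond identifying which element of $\mathrm{U}(2)$ encodes the Dirichlet boundary condition.

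First I would show that the Dirichlet condition $y(0)=y(1)=0$ corresponds to $U=-I$ in the parameterization (\ref{U1}). Indeed, substituting $U=-I$ kills the factor $I+U$ on the left-hand side identically, so the boundary condition reduces to $(I-U)\psi=2\psi=0$, i.e.\ $\psi=0$, which is exactly $y(0)=y(1)=0$. Therefore the Dirichlet eigenvalues $\lambda_n^D$ are precisely the eigenvalues of the SA problem associated to $U=-I$.

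Next I would feed this $U=-I$ into Proposition~\ref{curve}: $\lambda$ is a Dirichlet eigenvalue if and only if $\det(-I-\Gamma(\lambda))=0$. Since for $2\times 2$ matrices $\det(-X)=\det(X)$, this is equivalent to $\det(I+\Gamma(\lambda))=0$, which is exactly the condition that $-1$ is an eigenvalue of $\Gamma(\lambda)$, i.e.\ that $\Gamma(\lambda)\in\mathcal{U}_0$. This gives the stated biconditional. For the second sentence, I would invoke the standard fact recalled in the introduction that the Dirichlet spectrum is an infinite sequence $-\infty<\lambda_0^D\leq\lambda_1^D\leq\cdots\to+\infty$; consequently $\Gamma$ meets $\mathcal{U}_0$ at countably infinitely many parameter values.

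The argument is essentially a one-line consequence of Proposition~\ref{curve} once the identification $U=-I\leftrightarrow$ Dirichlet is in place, so there is no genuine obstacle. The only content to verify is that small identification, a direct reading of (\ref{U1}).
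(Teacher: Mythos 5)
Your proposal is correct and is precisely the argument the paper leaves implicit behind its one-line ``This is obvious'': identify the Dirichlet condition with $U=-I$ via (\ref{U1}), apply Proposition~\ref{curve} to get $\det(-I-\Gamma(\lambda))=\det(I+\Gamma(\lambda))=0$, and use the unboundedness of the Dirichlet spectrum. Your reading of ``intersects countably infinite times'' as referring to parameter values (not distinct image points) is also the right one, consistent with the paper's remark immediately following the corollary.
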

\begin{proof}
This is obvious.
\end{proof}
\noindent\emph{Remark}. However, the above result doesn't mean that $\Gamma$ has infinitely many intersection points with $\mathcal{U}_0$. Besides, $-1$ can be replaced by $e^{i\theta}I$, $\theta\in[0, 2\pi)$, and similar result holds.
\begin{example}Let $q\equiv 0$. Then for $\lambda> 0$, $y_1(x,\lambda)=\cos\sqrt{\lambda}x, y_2(x,\lambda)=\frac{\sin\sqrt{\lambda}x}{\sqrt{\lambda}}$. The two eigenvalues of $\Gamma(\lambda)$ are
\[\kappa_\pm(\lambda)=\frac{(\frac{1}{\sqrt{\lambda}}-\sqrt{\lambda})\sin \sqrt{\lambda}\pm 2 i}{(\frac{1}{\sqrt{\lambda}}+\sqrt{\lambda})\sin \sqrt{\lambda}+ 2i\cos \sqrt{\lambda}}.\]
For the Dirichlet boundary condition, $\lambda_n^D=(n+1)^2\pi^2$, $\kappa_\pm(\lambda_n^D)=\pm (-1)^{n+1}$. So in this case, the intersection points of $\Gamma$ and $\mathcal{U}_0$ all lie in the orbit through $\left( \begin{array}{cc}
1 & 0\\
0 & -1 \\
 \end{array}
    \right)$. In fact, there are only two such points, i.e. $\pm\left( \begin{array}{cc}
0 & 1\\
1 & 0 \\
 \end{array}
    \right)$.
\end{example}

Since the space $\mathrm{H}/W$ parameterizing all adjoint orbits is of dimension 2, and the induced curve of $\Gamma$ is analytic and, of course, of dimension 1, the characteristic curve $\Gamma$ would not go through a generic adjoint orbit of principal type.

\section{$\lambda_n$ as functions on adjoint orbits of principal type}
\label{sec4}
In this section, by adjoint orbits we will always refer to those of principal type.
We mainly consider adjoint orbits which have no joint point with the characteristic curve $\Gamma$. From the last section, we know a generic adjoint orbit is of this kind. By $\lambda_n^N$, we denote the $n$-th eigenvalue of the Neumann boundary condition.

For the orbit $\mathcal{O}_{\mu,\nu}$, by Eq.~(\ref{chh}) the corresponding characteristic equation is
\begin{equation}(\mu-\nu\cos 2\theta)\dot{y}_2+(\nu^2-\mu^2)y_2+(\mu+\nu\cos 2\theta)y_1-\dot{y}_1=-2\nu \sin 2\theta\cdot \cos \gamma.\label{c1}\end{equation}

\begin{lemma}\label{lm1}
Let $\lambda_n^{\pm}$ be the $n$-th eigenvalues of the boundary conditions
\[\dot{y}(0)=(\mu-\nu)y(0), \quad \dot{y}(1)=(\mu-\nu)y(1)\]
and
\[\dot{y}(0)=(\mu+\nu)y(0), \quad \dot{y}(1)=(\mu+\nu)y(1)\]
respectively. Then the function $\lambda_n$ on $\mathcal{O}_{\mu,\nu}$ satisfies
\[\lambda_n^-\leq \lambda_n\leq \lambda_n^+.\]
In particular, by the continuity principle, $\lambda_n$ is continuous on $\mathcal{O}_{\mu,\nu}$.
\end{lemma}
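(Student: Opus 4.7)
The natural proof is via the Courant--Fischer min--max characterisation of $\lambda_n$ applied to the Friedrichs quadratic forms of the self-adjoint realisations $L_A$ in a representation where the form domain does not depend on $A$. The three steps are: write the form, sandwich it in the matrix order, and push the sandwich through min--max.

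\textbf{Step 1 (common form domain).} Integrating by parts in $\int_0^1(-y''+qy)\bar y\,dx$ for $y\in D(L_A)$, and using the convention $\dot y(0)=-y'(0)$, $\dot y(1)=y'(1)$ together with $\dot\psi=A\psi$, converts the boundary data into a boundary pairing on $\mathbb{C}^2$. One obtains the Friedrichs form
\[
Q_A(y)=\int_0^1\!\bigl(|y'|^2+q|y|^2\bigr)\,dx \;-\;\langle\psi,A\psi\rangle_{\mathbb{C}^2},\qquad \psi=(y(0),y(1))^T,
\]
which extends continuously to the Sobolev space $H^1(J)$. The essential point is that $H^1(J)$ is the form domain \emph{for every} $A$, because the boundary condition now sits inside the pairing $\langle\psi,A\psi\rangle$ rather than restricting the domain; only this makes a pointwise comparison of the $Q_A$ meaningful.

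\textbf{Step 2 (matrix sandwich).} Every $A\in\mathcal{O}_{\mu,\nu}$ is Hermitian with spectrum $\{\mu-\nu,\mu+\nu\}$, hence $(\mu-\nu)I\le A\le(\mu+\nu)I$ in the Hermitian order on $\mathbb{C}^2$. Pairing with $\psi$ gives $(\mu-\nu)|\psi|^2\le\langle\psi,A\psi\rangle\le(\mu+\nu)|\psi|^2$ for every $y\in H^1(J)$. Substituting this into the expression for $Q_A$ produces a pointwise sandwich on $H^1(J)$ of $Q_A$ by the two forms $Q_{(\mu-\nu)I}$ and $Q_{(\mu+\nu)I}$ attached to the scalar boundary conditions that define $\lambda_n^-$ and $\lambda_n^+$.

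\textbf{Step 3 (Courant--Fischer and continuity).} Because all three forms share the domain $H^1(J)$, the min--max characterisation
\[
\lambda_n = \inf_{\substack{V\subset H^1(J)\\ \dim V=n+1}}\ \sup_{0\neq y\in V}\ \frac{Q_A(y)}{\|y\|_{L^2(J)}^2}
\]
is monotone in the form, and applying it to the Step~2 sandwich produces the desired $\lambda_n^-\le\lambda_n\le\lambda_n^+$ for every $n\in\mathbb{N}$ and every $A\in\mathcal{O}_{\mu,\nu}$. The sandwich furnishes a uniform lower bound on $\lambda_0$ over the compact orbit, so the continuity principle invoked in the introduction promotes this to continuity of each $\lambda_n:\mathcal{O}_{\mu,\nu}\to\mathbb{R}$.

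\textbf{Main obstacle.} The orientation of the sandwich is what requires care: the sign with which $\langle\psi,A\psi\rangle$ enters $Q_A$ (a minus, inherited from $\dot y(0)=-y'(0)$) dictates whether $\lambda_n$ is monotone increasing or decreasing in $A$ in the Hermitian order, and hence which of $\lambda_n^\pm$ is the upper and which is the lower bound. A direct audit on the scalar special case $A=\alpha I$ with $q\equiv 0$ and the constant test function $y\equiv 1$ (which gives $Q_{\alpha I}(1)=-2\alpha$) pins this down unambiguously and must be checked against the stated orientation $\lambda_n^-\le\lambda_n\le\lambda_n^+$; everything else in the argument is mechanical.
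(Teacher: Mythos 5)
Your proof follows essentially the same route as the paper's: the paper writes the identical form $Q(y)=\int_0^1\bigl(|y'|^2+q|y|^2\bigr)dx-\psi^\dagger A\psi$ on the fixed domain $H^1$, establishes the sandwich $(\mu-\nu)|\psi|^2\leq\psi^\dagger A\psi\leq(\mu+\nu)|\psi|^2$, and concludes via min--max and the continuity principle. The only real difference is how the sandwich is obtained: you invoke the Hermitian order $(\mu-\nu)I\leq A\leq(\mu+\nu)I$, immediate from the spectral theorem since every $A\in\mathcal{O}_{\mu,\nu}$ has spectrum $\{\mu-\nu,\mu+\nu\}$, whereas the paper computes $\psi^\dagger A\psi$ in the explicit $(\theta,\gamma)$-parameterization of the orbit and bounds the cross term with $2|ab|\leq|a|^2+|b|^2$. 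Your version is cleaner and makes transparent why the bounds are exactly the extreme eigenvalues of $A$; otherwise the two arguments coincide step for step.

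One substantive point, which your own audit in the final paragraph correctly detects but does not carry through: because $\psi^\dagger A\psi$ enters $Q_A$ with a minus sign, $Q_A$ is \emph{decreasing} in $A$ in the Hermitian order, so min--max applied to the sandwich actually yields $\lambda_n^+\leq\lambda_n\leq\lambda_n^-$ in the lemma's labeling --- the opposite of what you assert in Step~3. Your test computation $Q_{\alpha I}(1)=-2\alpha$ settles this unambiguously (eigenvalues of the Robin condition $\dot y=\alpha y$ decrease as $\alpha$ increases; indeed for $q\equiv 0$ one checks $\lambda_0\to-\infty$ as $\alpha\to+\infty$). The swap of superscripts is present in the paper's statement as well, and its proof glosses over it with ``the conclusion then follows''; the slip is harmless for the content that matters --- $\lambda_n$ is pinched between two fixed constants on the orbit, so $\lambda_0$ is uniformly bounded below and the continuity principle gives continuity of every $\lambda_n$ on $\mathcal{O}_{\mu,\nu}$ --- but having run the audit, you should have stated the corrected orientation rather than leaving the contradiction between Step~3 and your final paragraph unresolved.
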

\begin{proof}
For $A\in \mathcal{O}_{\mu,\nu}$, the associated quadratic form is
\[Q(y)=\int_0^1|y'|^2 dx+\int_0^1q(x)|y|^2dx-\psi^\dag A \psi,\quad y\in H^1,\]
where $H^1$ is the Sobolev space $W_{1,2}(J)$.

Note that
\[\psi^\dag A \psi=\mu|\psi|^2+(|y(1)|^2-|y(0)|^2)\nu\cos 2\theta+2\nu\Re[\bar{y}(0)y(1)e^{-i\gamma}]\sin 2\theta.\]
By the inequality $2|ab|\leq |a|^2+|b|^2$, we have
\[2\Re[\bar{y}(0)y(1)e^{-i\gamma}]\sin 2\theta\leq (1+\cos 2\theta)|y(0)|^2+(1-\cos 2\theta)|y(1)|^2\]
and
\[2\Re[\bar{y}(0)y(1)e^{-i\gamma}]\sin 2\theta\geq -(1-\cos 2\theta)|y(0)|^2-(1+\cos 2\theta)|y(1)|^2.\]
Therefore, we come to the estimation
\[(\mu-\nu)|\psi|^2\leq\psi^\dag A \psi\leq (\mu+\nu)|\psi|^2.\]
The conclusion then follows from the variational characterization of $\lambda_n(A)$--the min-max principle.
\end{proof}
 \noindent \emph{Remark}. The boundedness from below of $\lambda_n$ on $\mathcal{O}_{\mu,\nu}$ is actually a conclusion of \cite{Kong1} that $\lambda_n$ is continuous on $\mathcal{U}_1$, together with the fact that $S^2$ is compact. Conversely, minor modification of the proof of Lemma \ref{lm1} gives another proof of that $\lambda_n$ is continuous on $\mathcal{U}_1$.
 \begin{proposition}
 $\lambda_n$ is a continuous function on $\mathcal{U}_1$.
 \end{proposition}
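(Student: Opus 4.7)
The plan is to adapt the quadratic-form estimate from the proof of Lemma \ref{lm1} and then apply the continuity principle, exactly as the preceding remark suggests. Continuity is a local property, so it suffices to prove that every $A_0 \in \mathcal{U}_1$ has a neighborhood on which all $\lambda_n$ are continuous. Using the Cayley transform of Eq.~(\ref{Cay}) I would identify $\mathcal{U}_1$ with the space $\mathcal{M}$ of $2\times 2$ Hermitian matrices and carry out the argument in this chart.

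For the main step, I fix $A_0 \in \mathcal{M}$ and let $B$ be a closed operator-norm ball around $A_0$ in $\mathcal{M}$. Since $B$ is compact, there is an $M>0$ such that every eigenvalue of every $A\in B$ satisfies $|\zeta_i(A)|\leq M$. The spectral theorem then yields the uniform two-sided estimate
\[-M|\psi|^2 \leq \psi^\dag A \psi \leq M|\psi|^2,\qquad A\in B,\ \psi\in\mathbb{C}^2.\]
This plays exactly the role of the inequality $(\mu-\nu)|\psi|^2\leq \psi^\dag A\psi\leq (\mu+\nu)|\psi|^2$ in the proof of Lemma \ref{lm1}, except that the bound is now uniform over a neighborhood in $\mathcal{U}_1$ rather than over a single adjoint orbit. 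Writing the associated quadratic form $Q_A(y)=\int_0^1|y'|^2\,dx+\int_0^1 q|y|^2\,dx-\psi^\dag A\psi$ on $H^1$ and the comparison form $Q_{A_-}$ for the fixed matrix $A_-=\textup{diag}(M,M)$, I get $Q_A(y)\geq Q_{A_-}(y)$ for every $A\in B$; the min-max principle then gives $\lambda_n(A)\geq \lambda_n(A_-)$, so in particular $\lambda_0$ is bounded from below on $B$.

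Once local lower boundedness of $\lambda_0$ on $B$ is in hand, the continuity principle of \cite{Kong1} delivers continuity of every $\lambda_n$ on $B$, and since $A_0$ was arbitrary, continuity on all of $\mathcal{U}_1$ follows. I do not anticipate a substantive obstacle: the remark after Lemma \ref{lm1} already guarantees this route. The only thing to be careful about is promoting the orbit-indexed bound organised around the eigenvalues $\mu\pm\nu$ to a bound valid on a full chart-neighborhood, but this is automatic from the continuous dependence of the spectrum on the matrix entries in finite dimensions.
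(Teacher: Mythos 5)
Your proposal is correct and takes essentially the same route as the paper: the paper likewise produces a local uniform upper bound $\psi^\dag A\psi\leq(\mu_0+\nu_0+\delta)|\psi|^2$ on an open neighborhood of $A_0$ (there a union of adjoint orbits $V_\delta=\cup_{\mu+\nu<\mu_0+\nu_0+\delta}\mathcal{O}_{\mu,\nu}$ rather than your norm ball), uses the min-max principle to bound $\lambda_0$ from below, and then invokes the continuity principle. The only cosmetic differences are your choice of neighborhood and your two-sided spectral estimate, of which only the upper bound is actually used.
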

 \begin{proof}
 For any given $A_0\in \mathcal{U}_1$, let $\mathcal{O}_{\mu_0,\nu_0}$ be the orbit through $A_0$ (we allow $\nu_0$ to be 0 here). Then for $\delta>0$, the set
 \[V_\delta=\cup_{\mu+\nu<\mu_0+\nu_0+\delta} \mathcal{O}_{\mu,\nu}\]
 is an open neighbourhood of $A_0$. Note that for $A\in V_\delta$,
 \[(\mu_0+\nu_0+\delta)|\psi|^2\geq\psi^\dag A \psi.\]
 The min-max principle implies that $\lambda_0$ is bounded from below on $V_\delta$ and thus $\lambda_n$ is continuous on $V_\delta$, and in particular, continuous at $A_0$.
 \end{proof}
\begin{theorem}\label{M1} Assume that $\mathcal{O}_{\mu,\nu}$ has no joint point with $\Gamma$. Then for each $n$, $\lambda_n$ as a function on $\mathcal{O}_{\mu,\nu}$ is real analytic, and has exactly two critical points. Let $[a_n, b_n]$ be the range of $\lambda_n$ on $\mathcal{O}_{\mu,\nu}$. Then for each $n$,
\[a_n< b_n<a_{n+1}<b_{n+1}.\]
These $a_n, b_n$, $n=0,1,2,\cdots$ are exactly roots of the following equation:
\begin{equation}\nu^2(\dot{y}_2-y_1)^2+4\nu^2=[\mu(\dot{y}_2+y_1)+(\nu^2-\mu^2)y_2-\dot{y}_1]^2.\label{cr1}\end{equation}
\end{theorem}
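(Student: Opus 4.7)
The plan is to rewrite the characteristic equation (\ref{c1}) in a geometric form on $\mathcal{O}_{\mu,\nu}\cong S^2$. Setting
\[F(\lambda):=\mu(\dot{y}_2+y_1)+(\nu^2-\mu^2)y_2-\dot{y}_1,\qquad G(\lambda):=\nu(\dot{y}_2-y_1),\]
equation (\ref{c1}) becomes $F(\lambda)=G(\lambda)\cos 2\theta-2\nu\sin 2\theta\cos\gamma$. Parameterizing $\mathcal{O}_{\mu,\nu}$ by the unit vector $\vec{n}=(\cos 2\theta,\sin 2\theta\cos\gamma,\sin 2\theta\sin\gamma)\in S^2$, the equation reads $F(\lambda)=\vec{n}\cdot\vec{v}(\lambda)$ with $\vec{v}(\lambda)=(G(\lambda),-2\nu,0)$. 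Hence $\lambda$ is an eigenvalue of some $A\in\mathcal{O}_{\mu,\nu}$ iff $D(\lambda):=G(\lambda)^2+4\nu^2-F(\lambda)^2\geq 0$, and $D(\lambda)=0$ is precisely (\ref{cr1}). For $D(\lambda^*)>0$ the level set $\{A:\Delta(\lambda^*,A)=0\}$ is the intersection of $S^2$ with a non-tangent plane, i.e.\ a smooth circle; for $D(\lambda^*)=0$ it collapses to a single tangent point.

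The hypothesis $\mathcal{O}_{\mu,\nu}\cap\Gamma=\emptyset$ enters twice. First, every eigenvalue of every $A\in\mathcal{O}_{\mu,\nu}$ is simple, so $\partial_\lambda\Delta\neq 0$ at $(\lambda_n(A),A)$ and the implicit function theorem yields the real analyticity of each $\lambda_n$ on $\mathcal{O}_{\mu,\nu}$. Second, I will show it forces $D$ to have only simple zeros: a double zero $\lambda^*$ of $D$ entails $F'(\lambda^*)=G(\lambda^*)G'(\lambda^*)/F(\lambda^*)$, and a direct computation at the corresponding tangent point $A_0$ (where $\cos 2\theta_0=G(\lambda^*)/F(\lambda^*)$ and $\sin 2\theta_0\cos\gamma_0=-2\nu/F(\lambda^*)$) then gives $\partial_\lambda\Delta(\lambda^*,A_0)=0$, making $\lambda^*$ a multiple eigenvalue of $A_0$ and $A_0\in\Gamma\cap\mathcal{O}_{\mu,\nu}$, contrary to assumption. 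Consequently $\{D\geq 0\}$ is a disjoint union of honest closed intervals $[c_0,d_0],[c_1,d_1],\ldots$ with $d_i<c_{i+1}$.

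The range of the continuous function $\lambda_n$ on the connected compact orbit is a connected interval $[a_n,b_n]$ contained in some $[c_i,d_i]$. For any $\lambda^*\in(c_i,d_i)$ the level circle $\{A:\Delta(\lambda^*,A)=0\}$ is connected and is the disjoint union of the closed subsets $\{A:\lambda_k(A)=\lambda^*\}$; by connectedness only one index $n(\lambda^*)$ occurs, and this index is locally constant in $\lambda^*$, hence constant on $(c_i,d_i)$. So the entire $(c_i,d_i)$ is in the range of a single $\lambda_n$, forcing $[a_n,b_n]=[c_i,d_i]$; the $a_n,b_n$ are then precisely the zeros of $D$, and the pointwise simple-eigenvalue inequality $\lambda_n<\lambda_{n+1}$ combined with the disjointness of the components gives $a_n<b_n<a_{n+1}<b_{n+1}$.

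Finally, for each $\lambda^*\in(a_n,b_n)$ the level set $\lambda_n^{-1}(\lambda^*)$ is a smooth circle, so $\lambda^*$ is a regular value; the only critical values are $a_n,b_n$, each with a single preimage, yielding exactly two critical points (the global minimum and maximum). The main obstacle I anticipate is the passage from the description of $\{D\geq 0\}$ to the identification of its components with the ranges $[a_n,b_n]$—specifically, excluding double zeros of $D$ and pinning down which $\lambda_n$ covers which component—both of which depend delicately on the hypothesis $\mathcal{O}_{\mu,\nu}\cap\Gamma=\emptyset$.
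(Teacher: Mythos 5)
Your proposal is correct in substance, but it takes a genuinely different route from the paper's. After the same implicit-function-theorem step (both proofs use the no-$\Gamma$ hypothesis to get $\partial_\lambda\Delta\neq 0$ and hence real analyticity), the paper proceeds locally: differentiating the characteristic equation in $\gamma$ shows all critical points lie on the real circle $C_0$ ($\gamma=0,\pi$), the restricted equation (\ref{cr}) admits at most two solutions $\theta$ for each fixed $\lambda$, an unspecified ``element calculation'' excludes degenerate critical points, and compactness of $C_0$ yields exactly two; the interlacing $b_n<a_{n+1}$ is then obtained by a contradiction argument exploiting uniqueness of the minimizer. You instead work globally: writing the characteristic equation as $F(\lambda)=\vec{n}\cdot\vec{v}(\lambda)$ on $S^2$ reduces everything to plane--sphere intersection geometry; the discriminant $G^2+4\nu^2-F^2$ governs which $\lambda$ are attained; the no-$\Gamma$ hypothesis forces its zeros to be simple (your tangency computation $\partial_\lambda\Delta(\lambda^*,A_0)=F'-GG'/F$ is correct, and the division by $F$ is licensed because $F^2=G^2+4\nu^2\geq 4\nu^2>0$ at any zero --- a point worth stating, since $\nu>0$ on a principal orbit); and the connectedness/locally-constant-index argument identifies each component of the nonnegativity set with the range of a single $\lambda_n$. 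Your route buys more than the paper's: it simultaneously recovers the content of the paper's later Theorem \ref{S} (interior level sets are circles, extremal level sets are points) and replaces the two least transparent steps of the paper (the nondegeneracy calculation and the interlacing contradiction) by structural arguments; note also that your component argument implicitly excludes an unbounded component via boundedness of $\lambda_n$ on the compact orbit, which deserves an explicit sentence.

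One step does need repair: the inference ``the level set $\lambda_n^{-1}(\lambda^*)$ is a smooth circle, so $\lambda^*$ is a regular value'' is not valid as stated --- on $S^2$ the function $z^3$ has the smooth circle $\{z=0\}$ as a level set consisting entirely of critical points, at a value interior to the range. The fix lies inside your own framework: since $\lambda_n$ is defined implicitly by $\Delta(\lambda_n(\vec{n}),\vec{n})=0$ with $\partial_\lambda\Delta\neq 0$, the gradient of $\lambda_n$ vanishes at $\vec{n}$ exactly when the tangential (to $S^2$) component of $\vec{v}(\lambda)$ vanishes, i.e.\ when $\vec{n}=\pm\vec{v}/|\vec{v}|$; substituting into $F=\vec{n}\cdot\vec{v}$ gives $F^2=|\vec{v}|^2$, i.e.\ the discriminant vanishes at $\lambda_n(\vec{n})$. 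Hence critical points occur only over the endpoint values $a_n,b_n$, where the level set is the single tangency point, giving exactly two critical points and closing the argument.
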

\begin{proof}
Denote the LHS of Eq.~(\ref{c1}) by $D(\lambda, p)$, viewed as a function on $\mathbb{R}\times \mathcal{O}_{\mu,\nu}$. Since $\mathcal{O}_{\mu,\nu}$ has no joint point with $\Gamma$, $\frac{\partial D}{\partial \lambda}|_{\lambda_n(p), p}\neq 0$ for any $p\in \mathcal{O}_{\mu,\nu}$. Besides, $D(\lambda, p)$ and the RHS of Eq.~(\ref{c1}) are real analytic functions on $\mathbb{R}\times \mathcal{O}_{\mu,\nu}$. So by the implicit function theorem, $\lambda_n$ is a real analytic function on $\mathcal{O}_{\mu,\nu}$.

It is easy to find that for a critical point $p$, we must have $\sin \gamma=0$. This implies that all critical points must lie on the circle $C_0$ formed by the two semi-circles $\gamma=0$ and $\gamma=\pi$. So, to find all critical points of $\lambda_n$ on $\mathcal{O}_{\mu,\nu}$, we only need to find all critical points of $\lambda_n$ on $C_0$. Now consider the characteristic equation restricted on $C_0$, i.e.
\begin{equation}\nu(\dot{y}_2-y_1)\cos 2\theta-2\nu\sin 2\theta=\mu(\dot{y}_2+y_1)+(\nu^2-\mu^2)y_2-\dot{y}_1, \label{cr}\end{equation}
where $\theta\in (-\frac{\pi}{2}, \frac{\pi}{2}]$. For a given $\lambda\in \mathbb{R}$, there are at most two values of $\theta$ satisfying the above equation. It is an element calculation to show that $\lambda_n$ has no degenerate critical point. These together imply that there are at most two critical points of $\lambda_n$ as a function on $C_0$. Since $C_0$ is compact, we know that there are \emph{precisely} two critical points, one the maximizer and the other the minimizer.

Any critical value $\kappa$ of $\lambda_n$ must satisfy Eq.~(\ref{cr1}). Conversely, it is not hard to find that any root $\kappa$ of Eq.~(\ref{cr1}) must be a critical value of some $\lambda_n$. By the uniqueness of minimizer and maximizer, $\kappa=a_n$ or $b_n$.

If $a_{n+1}=\lambda_{n+1}(p_0)$ for some $p_0\in C_0$, then
\[a_{n+1}>\lambda_n(p_0)\geq a_n.\]

We only need to check that $a_{n+1}> b_n$. If it is not the case, then $a_{n+1}\in (a_n, b_n]$ and there is another point $p_1\in C_0$ such that \[\lambda_n(p_1)=\lambda_{n+1}(p_0)=a_{n+1}.\]
If $p_1=p_0$, this means that $a_{n+1}$ is a double eigenvalue of the boundary condition $p_0$, contradicting that $\Gamma$ has no joint point with $\mathcal{O}_{\mu, \nu}$; if $p_1\neq p_0$, then for $\lambda=a_{n+1}$, Eq.~(\ref{cr}) of $\theta$ has at least two different solutions. This contradicts the fact that $a_{n+1}$ is the unique minimum of $\lambda_{n+1}$. The proof is then completed.

\end{proof}
\begin{example}
Let $q\equiv 0$. Then for $\lambda>0$ Eq.~(\ref{cr}) is
\[-2\nu\sin 2\theta=2\mu\cos \sqrt{\lambda}+(\nu^2-\mu^2)\frac{\sin\sqrt{\lambda} }{\sqrt{\lambda}}+\sqrt{\lambda}\cos \sqrt{\lambda}.\]
$\theta=\pm\frac{\pi}{4}$ are the common critical points of all $\lambda_n$ such that $\lambda_n>0$. Eq.~(\ref{cr1}) now is
\[2\mu\cos \sqrt{\lambda}+(\nu^2-\mu^2)\frac{\sin\sqrt{\lambda} }{\sqrt{\lambda}}+\sqrt{\lambda}\cos \sqrt{\lambda}=\pm 2\nu.\]
If $\mu=\nu$, then the above equation obtains a more accessible form:
\[\cos \sqrt{\lambda}=\pm\frac{2\nu}{2\nu+\sqrt{\lambda}}.\]
\end{example}

From \cite[Thm.~3.73]{Kong1}, we can derive that, nearly all points in $\mathcal{U}_0$ are discontinuity points of $\lambda_n$ as a function on $\mathrm{U}(2)$. This, of course, doesn't exclude the possibility that $\lambda_n$ is continuous on adjoint orbits lying in $\mathcal{U}_0$.

For the orbit $\mathcal{O}_\alpha$ with $\alpha\in[0,\pi)$, the associated characteristic equation is
 \[(-\cos \frac{\alpha}{2}+t)\dot{y}_2+(-\cos\frac{\alpha}{2}-t)y_1-2\sin \frac{\alpha}{2} y_2=2\cos (\gamma+\frac{\pi}{2})\sqrt{\cos^2\frac{\alpha}{2}-t^2}.\]

\begin{lemma}\label{i}On the orbit $\mathcal{O}_\alpha$ with $\alpha\in[0, \pi)$,
\[\lambda_n\geq\lambda_n^N,\quad n=0,1,2,\cdots.\]
In particular, by the continuity principle, $\lambda_n$ is continuous on $\mathcal{O}_\alpha$.
\end{lemma}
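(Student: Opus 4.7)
The plan is to mirror the strategy of Lemma~\ref{lm1}: produce a quadratic form for each $U\in\mathcal{O}_\alpha$ whose integrand and form domain can be bounded from below by those of the Neumann form, then invoke the min-max principle. The new difficulty is that $\mathcal{O}_\alpha$ lies entirely in $\mathcal{U}_0$, so the Cayley transform $A$ of (\ref{Cay}) is no longer globally defined; one has to analyze (\ref{U1}) in a more splitting-conscious manner.

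First I would decompose $\mathbb{C}^2=V\oplus V^\perp$, where $V=\ker(I+U)$ is the $(-1)$-eigenspace of $U$ and $V^\perp$ is the $e^{i\alpha}$-eigenspace (orthogonal since $U$ is unitary with distinct eigenvalues). Restricting (\ref{U1}) to $V$ forces $\psi|_V=0$, while restricting to $V^\perp$ and using the scalar Cayley identity $-i(1-e^{i\alpha})/(1+e^{i\alpha})=-\tan(\alpha/2)$ gives $\dot\psi|_{V^\perp}=-\tan(\alpha/2)\,\psi|_{V^\perp}$. Integration by parts exactly as in the proof of Lemma~\ref{lm1} then shows that the associated self-adjoint extension has form domain $\mathcal{D}=\{y\in H^1:\psi(y)|_V=0\}$ and quadratic form
\[
Q(y)=\int_0^1 |y'|^2\,dx+\int_0^1 q(x)|y|^2\,dx+\tan(\alpha/2)\,|\psi|^2,
\]
where the Dirichlet constraint $\psi|_V=0$ lets us replace $|\psi|_{V^\perp}|^2$ by $|\psi|^2$ in the boundary contribution $-\psi^\dag A\psi$ computed on $V^\perp$.

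Since $\alpha\in[0,\pi)$ forces $\tan(\alpha/2)\ge 0$, we obtain $Q(y)\ge Q_N(y)$ for every $y\in\mathcal{D}$, where $Q_N$ is the Neumann form with domain all of $H^1$. Because $\mathcal{D}$ is moreover a linear subspace of $H^1$, applying the min-max principle over $(n+1)$-dimensional subspaces of the smaller space $\mathcal{D}$ can only raise the eigenvalues; both effects push the spectrum upward, yielding $\lambda_n\ge\lambda_n^N$. The continuity assertion then follows from the continuity principle, since this lower bound is uniform on $\mathcal{O}_\alpha$. The main subtlety is the signed derivation of the form — in particular verifying that the boundary contribution enters with a $+\tan(\alpha/2)$ once the outward-normal convention for $\dot\psi$ and the sign of the Cayley transform are combined correctly; after that step both monotonicities are immediate.
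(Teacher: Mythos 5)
Your proof is correct and coincides in essence with the paper's: the paper likewise writes down the quadratic form of each $U\in\mathcal{O}_\alpha$ (a nonnegative boundary term on a constrained form domain) and applies the min--max principle twice, once because the boundary term dominates the Neumann form $Q_0$ and once because the trial subspaces range over a subspace of $H^1$. The only difference is organizational: the paper computes in the coordinates $(t,\gamma)$, splitting into the cases $t\neq\cos\frac{\alpha}{2}$ (domain $y(1)=e^{-i(\gamma+\frac{\pi}{2})}\sqrt{\tfrac{\cos\frac{\alpha}{2}+t}{\cos\frac{\alpha}{2}-t}}\,y(0)$ with boundary term $\tfrac{2\sin\frac{\alpha}{2}}{\cos\frac{\alpha}{2}-t}|y(0)|^2$) and $t=\cos\frac{\alpha}{2}$ (domain $y(0)=0$ with term $\tan\frac{\alpha}{2}|y(1)|^2$), whereas your basis-free form $\tan\frac{\alpha}{2}|\psi|^2$ on $\{\psi\in V^{\perp}\}$ reproduces both cases at once, since the constraint gives $|\psi|^2=\tfrac{2\cos\frac{\alpha}{2}}{\cos\frac{\alpha}{2}-t}|y(0)|^2$ in the first case and $|\psi|^2=|y(1)|^2$ in the second.
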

\begin{proof}If $t\neq \cos \frac{\alpha}{2}$, the associated quadratic form is
\[Q_1(y)=\int_0^1|y'|^2 dx+\int_0^1q(x)|y|^2dx+\frac{2\sin \frac{\alpha}{2}}{\cos \frac{\alpha}{2}-t}|y(0)|^2, y\in H_{\gamma,t}^1, \]
where \[H_{\gamma,t}^1=\{y\in H^1|y(1)=e^{-i(\gamma+\frac{\pi}{2})}\sqrt{\frac{\cos \frac{\alpha}{2}+t}{\cos \frac{\alpha}{2}-t}}y(0)\}\subset H^1.\]

Note that in this case, by the min-max principle,
\[\lambda_n=\min_{S_{n+1}\subset H^1_{\gamma, t},}\max_{y\in S_{n+1}-\{0\}}\frac{Q_1(y)}{\|y\|^2},\]
where $S_{n+1}$ ranges over all $n+1$-dimensional subspaces of $H^1_{\gamma, t}$.
Since
\[\min_{S_{n+1}\subset H^1_{\gamma, t},}\max_{y\in S_{n+1}-\{0\}}\frac{Q_1(y)}{\|y\|^2}\geq \min_{S_{n+1}\subset H^1_{\gamma, t}}\max_{y\in S_{n+1}-\{0\}}\frac{Q_0(y)}{\|y\|^2}\]
where $Q_0(y)=\int_0^1|y'|^2 dx+\int_0^1q(x)|y|^2dx$, and
\[\lambda_n^N=\min_{S_{n+1}\subset H^1}\max_{y\in S_{n+1}-\{0\}}\frac{Q_0(y)}{\|y\|^2}\]
where $S_{n+1}$ ranges over all $n+1$-dimensional subspaces of $H^1$, we must have
\[\lambda_n\geq\lambda_n^N.\]

If $t=\cos \frac{\alpha}{2}$, the associated quadratic form is
\[Q_2(y)=\int_0^1|y'|^2 dx+\int_0^1q(x)|y|^2dx+\tan \frac{\alpha}{2}|y(1)|^2,\quad y\in H^1, y(0)=0.\]
A similar argument then leads to the inequality $\lambda_n\geq \lambda_n^N.$
\end{proof}
\begin{corollary}Let $\Omega:=\cup_{\alpha\in [0,\pi)}\mathcal{O}_\alpha$. Then $\lambda_n$ are continuous functions on $\Omega$.
\begin{proof}
Note that $\lambda_0^N$ is independent of the orbit parameter $\alpha$. The inequality in Lemma \ref{i} holds uniformly on $\Omega$. By the continuity principle, the conclusion follows.
\end{proof}

\end{corollary}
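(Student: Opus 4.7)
The plan is to appeal to the \emph{continuity principle} cited in the introduction: on any subset $S \subset \mathcal{U}$ along which $\lambda_0$ is bounded from below, every eigenvalue $\lambda_n$ is automatically continuous. So the entire task collapses to producing a single uniform lower bound for $\lambda_0$ on $\Omega$.

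First I would recall that Lemma \ref{i} was proved orbit-by-orbit, yielding $\lambda_n \geq \lambda_n^N$ on each individual $\mathcal{O}_\alpha$ with $\alpha \in [0,\pi)$. Specializing to $n=0$ gives the pointwise inequality $\lambda_0(U) \geq \lambda_0^N$ for every $U$ lying on any such orbit. The decisive observation is then that the Neumann eigenvalue $\lambda_0^N$ depends only on the coefficient $q$ of the operator $l$ and on the fixed Neumann boundary data $\dot{y}(0)=\dot{y}(1)=0$; in particular it carries no dependence whatsoever on the orbit parameter $\alpha$. Consequently the orbitwise bounds assemble into one uniform inequality $\lambda_0(U) \geq \lambda_0^N$ valid for \emph{every} $U \in \Omega = \bigcup_{\alpha\in[0,\pi)} \mathcal{O}_\alpha$. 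Invoking the continuity principle on $S = \Omega$ then delivers the conclusion in one line.

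I do not anticipate a serious obstacle: the analytic work was carried out in Lemma \ref{i}, and the corollary is simply the packaging step that exploits the $\alpha$-independence of $\lambda_0^N$ to convert a family of pointwise bounds into a uniform one. If anything were to require a second look, it would be whether the continuity principle of \cite{Kong1} applies to a subset of $\mathcal{U}$ sitting entirely inside $\mathcal{U}_0$, where the Cayley coordinate $A$ of Eq.~(\ref{Cay}) blows up. But that principle is stated intrinsically on $\mathcal{U}$ equipped with its natural topology as $\mathrm{U}(2)$, not in the Hermitian chart, so no reformulation is necessary and the proof is complete.
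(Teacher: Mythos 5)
Your proposal is correct and coincides with the paper's own argument: both proofs specialize Lemma \ref{i} to $n=0$, observe that $\lambda_0^N$ is independent of the orbit parameter $\alpha$ so the lower bound $\lambda_0 \geq \lambda_0^N$ holds uniformly on $\Omega$, and then invoke the continuity principle. Your closing remark that the principle applies intrinsically on $\mathcal{U}$ with its $\mathrm{U}(2)$ topology (so membership in $\mathcal{U}_0$ causes no difficulty) is a sound clarification, not a deviation.
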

\begin{theorem}\label{M2}Assume that $\mathcal{O}_\alpha$ with $\alpha\in[0, \pi)$ has no joint point with $\Gamma$. Then on $\mathcal{O}_\alpha$, $\lambda_n$ is real analytic, and has exactly two critical points. Let $[a_n, b_n]$ be the range of $\lambda_n$ on $\mathcal{O}_\alpha$. Then for each $n$,
\[a_n< b_n<a_{n+1}<b_{n+1}.\]
These $a_n, b_n$, $n=0,1,2,\cdots$ are exactly roots of the following equation:
\begin{equation}
(y_1-\dot{y}_2)^2+4=(\dot{y}_2+y_1+2y_2\tan\frac{\alpha}{2})^2.
\end{equation}
\end{theorem}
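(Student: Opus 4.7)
The plan is to mirror the argument of Theorem~\ref{M1} step-by-step, using the parametrization of $\mathcal{O}_\alpha$ given above and its characteristic equation. For real analyticity, view the characteristic equation as $F(\lambda,t,\gamma)=0$ with $F$ real analytic in all arguments; the hypothesis that $\mathcal{O}_\alpha\cap\Gamma=\emptyset$ prevents any $\lambda_n(p)$ from being a multiple eigenvalue, so $\partial_\lambda F|_{(\lambda_n(p),p)}\neq 0$, and the implicit function theorem yields real analyticity of $\lambda_n$ on $\mathcal{O}_\alpha$.

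Next I would locate the critical points. Differentiating $F(\lambda_n(t,\gamma),t,\gamma)=0$ in $\gamma$, a critical point of $\lambda_n$ forces $\partial_\gamma F=0$, which here collapses to
\[
\cos\gamma\cdot\sqrt{\cos^2\tfrac{\alpha}{2}-t^2}=0.
\]
Thus every critical point lies either on the circle $C_0=\{\gamma=\pi/2\}\cup\{\gamma=3\pi/2\}$ (which, by the same calculation as in the earlier remark, is precisely the circle of real boundary conditions in $\mathcal{O}_\alpha$) or at the two poles $t=\pm\cos\frac{\alpha}{2}$, which themselves lie in the closure of $C_0$. Restricting the characteristic equation to $C_0$ by setting $\sin\gamma=\epsilon\in\{\pm1\}$ and squaring, one obtains a quadratic in $t$ with coefficients depending analytically on $\lambda$; for each fixed $\lambda$ it has at most two roots, so $\lambda_n|_{C_0}$ attains any value at most twice. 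A routine computation shows the critical points are non-degenerate, so by compactness of $C_0\cong S^1$ there are exactly two of them, a maximizer and a minimizer.

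The critical value equation comes from setting the discriminant of that quadratic in $t$ to zero. Writing $D=\dot y_2-y_1$ and $B=\cos\frac{\alpha}{2}(\dot y_2+y_1)+2\sin\frac{\alpha}{2}\,y_2$, the quadratic is $(D^2+4)t^2-2DB\,t+(B^2-4\cos^2\frac{\alpha}{2})=0$, whose discriminant after simplification equals $16[\cos^2\frac{\alpha}{2}(D^2+4)-B^2]$. Setting this to zero and dividing by $\cos^2\frac{\alpha}{2}$ (valid since $\alpha\in[0,\pi)$) gives exactly the equation claimed. Conversely any root of that equation corresponds to a double-root $t$ on $C_0$, hence a critical value of some $\lambda_n$, with $a_n,b_n$ picked out by the min/max property.

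For the interlacing $a_n<b_n<a_{n+1}<b_{n+1}$, I would copy the argument from Theorem~\ref{M1} verbatim: $a_{n+1}\geq \lambda_n(p_0)\geq a_n$ is immediate from the monotone ordering of the eigenvalues, and if $a_{n+1}\leq b_n$ then either $a_{n+1}$ is a double eigenvalue of some single boundary condition (contradicting $\mathcal{O}_\alpha\cap\Gamma=\emptyset$) or it is attained at two distinct points of $C_0$, contradicting that $a_{n+1}$ is the unique minimum of $\lambda_{n+1}$. The main subtlety I expect is handling the poles $t=\pm\cos\frac{\alpha}{2}$ of the parametrization: one must check that critical behavior at the poles is correctly captured by the squared equation (which is fine, since at the poles the $\sqrt{\phantom{x}}$ term vanishes so the quadratic degenerates smoothly) and that the two separated boundary conditions in $\mathcal{O}_\alpha$ are included in the counting on $C_0$.
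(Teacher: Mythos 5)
Correct, and essentially the paper's own approach: the paper proves this theorem by substituting $t=\cos\frac{\alpha}{2}\sin\tau$ into the characteristic equation and then invoking the argument of Theorem~\ref{M1} verbatim, which is exactly the scheme you follow (implicit function theorem for analyticity via $\mathcal{O}_\alpha\cap\Gamma=\emptyset$, reduction of critical points to the real-boundary-condition circle, an at-most-two-solutions count plus nondegeneracy and compactness, and the same interlacing argument). Your explicit details --- the critical circle $\gamma\in\{\pi/2,3\pi/2\}$, the quadratic $(D^2+4)t^2-2DBt+(B^2-4\cos^2\frac{\alpha}{2})=0$ with $D=\dot{y}_2-y_1$, $B=\cos\frac{\alpha}{2}(\dot{y}_2+y_1)+2\sin\frac{\alpha}{2}\,y_2$, and the discriminant condition $\cos^2\frac{\alpha}{2}(D^2+4)=B^2$, whose double root $t=DB/(D^2+4)$ indeed lies strictly inside $(-\cos\frac{\alpha}{2},\cos\frac{\alpha}{2})$ --- are precisely the computations the paper omits, and they check out.
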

\begin{proof}Let $t=\cos \frac{\alpha}{2}\sin \tau$, $\tau\in[-\frac{\pi}{2},\frac{\pi}{2}]$. Then the characteristic equation becomes
\[2\cos(\gamma+\frac{\pi}{2})\cos \tau+(1-\sin \tau)\dot{y}_2+(1+\sin \tau)y_1+2y_2\tan\frac{\alpha}{2}=0.\]
Then the argument in the proof of Thm.~\ref{M1} still holds. We omit the details here.
\end{proof}

For the orbit $\mathcal{O}_\alpha$ with $\alpha\in(\pi, 2\pi)$, the corresponding characteristic equation is
 \begin{equation}(\cos \frac{\alpha}{2}+t)\dot{y}_2+(\cos\frac{\alpha}{2}-t)y_1+2\sin \frac{\alpha}{2} y_2=2\cos (\gamma+\frac{\pi}{2})\sqrt{\cos^2\frac{\alpha}{2}-t^2}.\label{ch}\end{equation}
 \begin{lemma}\label{ii}On the orbit $\mathcal{O}_\alpha$ with $\alpha\in(\pi, 2\pi)$, $\lambda_0$ is bounded from below. In particular, by the continuity principle, $\lambda_n$ is continuous on $\mathcal{O}_\alpha$.
\end{lemma}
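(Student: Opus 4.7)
The plan is to adapt the quadratic form argument of Lemma~\ref{i} to the range $\alpha \in (\pi, 2\pi)$, where the essential new feature is that the boundary term in the quadratic form acquires the \emph{wrong} sign (since $\tan(\alpha/2) < 0$ for $\alpha/2 \in (\pi/2, \pi)$). Since $-1$ is still an eigenvalue of every $U \in \mathcal{O}_\alpha$, these boundary conditions lie in $\mathcal{U}_0$ and cannot be written as $\dot\psi = A\psi$; instead they consist of a Dirichlet-type constraint together with a Robin-type equation on a one-dimensional subspace. Letting $v, w$ be orthonormal eigenvectors of $U$ for the eigenvalues $-1$ and $e^{i\alpha}$ respectively, projection of Eq.~(\ref{U1}) onto $v$ and $w$ shows the boundary condition is equivalent to
\[v^\dagger\psi = 0, \qquad w^\dagger \dot\psi = -\tan(\alpha/2)\, w^\dagger \psi.\]

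Integration by parts, using $\psi^\dagger\dot\psi = \overline{w^\dagger\psi}(w^\dagger\dot\psi) = -\tan(\alpha/2)|w^\dagger\psi|^2$, then yields the quadratic form
\[Q(y) = \int_0^1 |y'|^2\, dx + \int_0^1 q|y|^2\, dx + \tan(\alpha/2)\,|w^\dagger \psi|^2, \qquad y \in H^1_v := \{y \in H^1 : v^\dagger \psi = 0\},\]
with $\lambda_n$ characterized by min--max over $(n+1)$-dimensional subspaces of $H^1_v$. Cauchy--Schwarz yields $|w^\dagger \psi|^2 \leq |\psi|^2 = |y(0)|^2 + |y(1)|^2$, and the standard one-dimensional trace estimate $|y(0)|^2 + |y(1)|^2 \leq \epsilon \|y'\|_{L^2}^2 + C_\epsilon \|y\|_{L^2}^2$ (valid for every $\epsilon > 0$), applied with $\epsilon = 1/(2|\tan(\alpha/2)|)$, absorbs the bad boundary term into the Dirichlet integral and gives
\[Q(y) \geq \frac{1}{2}\int_0^1 |y'|^2\, dx - M\int_0^1 |y|^2\, dx,\]
for a constant $M$ depending only on $\alpha$ and $q$, and in particular independent of the representative in $\mathcal{O}_\alpha$. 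The min--max principle then forces $\lambda_0 \geq -M$ uniformly on $\mathcal{O}_\alpha$, and the continuity principle of \cite{Kong1} delivers continuity of every $\lambda_n$.

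The main obstacle will be the degeneration of the subspace $H^1_v$ at the two points of $\mathcal{O}_\alpha$ where $v$ becomes a coordinate vector and the constraint $v^\dagger\psi = 0$ collapses to a pure Dirichlet condition $y(0) = 0$ or $y(1) = 0$ at one endpoint; this mirrors the splitting into the cases $t = \cos(\alpha/2)$ and $t \neq \cos(\alpha/2)$ in the proof of Lemma~\ref{i}. Fortunately, both Cauchy--Schwarz and the trace estimate remain valid when one endpoint value is forced to vanish, so the uniform lower bound $\lambda_0 \geq -M$ persists through these exceptional points and no separate case analysis is needed for the conclusion.
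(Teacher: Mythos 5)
Your proof is correct, and it takes a genuinely different route from the paper's. The paper argues on the side of the characteristic equation: using the asymptotics $y_1(1,-s^2)=\cosh s+O(e^s/s)$, $y_2(1,-s^2)=\frac{\sinh s}{s}+O(e^s/s^2)$, $\dot{y}_2(1,-s^2)=\cosh s+O(e^s/s)$ from \cite{Kong3}, it divides both sides of Eq.~(\ref{ch}) by $\cosh s$ and shows that for $s$ large the left-hand side is $<\cos\frac{\alpha}{2}$ while the right-hand side is $>\cos\frac{\alpha}{2}$, so the equation has no very negative roots; the remark following the lemma even states that the quadratic-form argument of Lemma~\ref{i} ``fails to hold,'' because in the $(t,\gamma)$-coordinates the Robin coefficient $\frac{2\sin(\alpha/2)}{\cos(\alpha/2)+t}$ is negative and unbounded as $t\to-\cos\frac{\alpha}{2}$. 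Your observation is that this divergence is an artifact of normalizing the boundary term by $|y(0)|^2$: your projection computation is correct (for a unitary $U$ with $Uv=\lambda v$ one has $v^\dagger U=\lambda v^\dagger$, giving $v^\dagger\psi=0$ and $w^\dagger\dot{\psi}=-\tan\frac{\alpha}{2}\,w^\dagger\psi$ since $\frac{1-e^{i\alpha}}{i(1+e^{i\alpha})}=-\tan\frac{\alpha}{2}$), and on the constraint space $|w^\dagger\psi|^2=|\psi|^2$, so the Robin coefficient is the \emph{constant} $\tan\frac{\alpha}{2}$ across the entire orbit; one checks directly that $\tan\frac{\alpha}{2}\,(|y(0)|^2+|y(1)|^2)$ reduces to the paper's coefficient $\frac{2\sin(\alpha/2)}{\cos(\alpha/2)+t}|y(0)|^2$ on $H^1_{\gamma,t}$, so the two forms agree. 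The $\epsilon$-absorption trace inequality then gives a lower bound $\lambda_0\geq -M$ uniform over $\mathcal{O}_\alpha$, and the two diagonal points $t=\pm\cos\frac{\alpha}{2}$ of the orbit indeed cause no trouble. What each approach buys: yours is more elementary and robust --- it needs no solution asymptotics and hence not the continuity of $q$ that the paper explicitly says is used at this step (boundedness of $q$ suffices, and even $q\in L^1$ via $\|y\|_\infty^2\leq\epsilon\|y'\|_{L^2}^2+C_\epsilon\|y\|_{L^2}^2$), it yields an explicit uniform bound, and it in effect rescues the variational method the paper set aside; the paper's argument, in exchange, stays within its characteristic-curve formalism and gives quantitative information on how negative an eigenvalue on $\mathcal{O}_\alpha$ could be. For completeness you should add one sentence noting that $Q$ with form domain $H^1_v$ is closed and semibounded (your absorption estimate itself gives closedness, since it makes $Q$ equivalent to the $H^1$-norm on $H^1_v$) and is the form of the self-adjoint operator determined by $U$, which is exactly how the paper itself uses the min--max principle in Lemma~\ref{i}.
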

\begin{proof}We only need to prove that for sufficiently negative $\lambda$, Eq.~(\ref{ch}) cannot hold for any $\gamma$ and $t$. For this purpose, we should use the following estimations for sufficiently negative $\lambda=-s^2(s>0)$:
\[
y_1(1, \lambda)=\cosh s+O(\frac{e^s}{s}),\quad y_2(1, \lambda)=\frac{\sinh s}{s}+O(\frac{e^s}{s^2}), \quad \dot{y}_2(1, \lambda)=\cosh s+O(\frac{e^s}{s}).
\]
These results are not hard to obtain from Lemma~2.1.1 and Lemma~2.1.2 of \cite[Chap.~2]{Kong3}. Note that unlike in the previous situation, the continuity of $q$ is used to obtain these estimations.

Divide the LHS of Eq.~(\ref{ch}) by $\cosh s$. Then for sufficiently large $s$, the result $< \cos \frac{\alpha}{2}$. Divide the RHS of Eq.~(\ref{ch}) by $\cosh s$. Then for sufficiently large $s$, the result $> \cos \frac{\alpha}{2}$. This is exactly what we want.

\end{proof}
\noindent\emph{Remark.} For $t<-\cos\frac{\alpha}{2}$, the associated quadratic form is
\[Q_1(y)=\int_0^1|y'|^2 dx+\int_0^1q(x)|y|^2dx+\frac{2\sin \frac{\alpha}{2}}{\cos \frac{\alpha}{2}+t}|y(0)|^2, y\in H_{\gamma,t}^1, \]
where \[H_{\gamma,t}^1=\{y\in H^1|y(1)=e^{-i(\gamma+\frac{\pi}{2})}\sqrt{\frac{-\cos \frac{\alpha}{2}+t}{-\cos \frac{\alpha}{2}-t}}y(0)\}\subset H^1.\]
The argument in the proof of Lemma \ref{i} fails to hold. The situation is similar for $t=-\cos\frac{\alpha}{2}$. This is the reason why we have turned to the several estimations in the proof of the above lemma.

\begin{theorem}\label{M3}Assume that $\mathcal{O}_\alpha$ with $\alpha\in (\pi, 2\pi)$ has no joint point with $\Gamma$. Then on $\mathcal{O}_\alpha$, $\lambda_n$ is real analytic, and has exactly two critical points. Let $[a_n, b_n]$ be the range of $\lambda_n$ on $\mathcal{O}_\alpha$. Then for each $n$,
\[a_n< b_n<a_{n+1}<b_{n+1}.\]
These $a_n, b_n$, $n=0,1,2,\cdots$ are exactly roots of the following equation:
\begin{equation}
(y_1-\dot{y}_2)^2+4=(\dot{y}_2+y_1+2y_2\tan\frac{\alpha}{2})^2.
\end{equation}
\end{theorem}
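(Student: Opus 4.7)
The proof will parallel that of Theorem~\ref{M2} almost verbatim, with continuity of $\lambda_n$ on $\mathcal{O}_\alpha$ supplied by Lemma~\ref{ii} in place of Lemma~\ref{i}. The only preliminary step requiring real attention is the substitution that removes the square-root singularity in the characteristic equation~(\ref{ch}). Because $\cos\frac{\alpha}{2}<0$ for $\alpha\in(\pi,2\pi)$, I would set $t=-\cos\frac{\alpha}{2}\sin\tau$ with $\tau\in[-\pi/2,\pi/2]$, so that $\sqrt{\cos^2\frac{\alpha}{2}-t^2}=-\cos\frac{\alpha}{2}\cos\tau$. Dividing Eq.~(\ref{ch}) by $\cos\frac{\alpha}{2}$ and using $\sin\frac{\alpha}{2}/\cos\frac{\alpha}{2}=\tan\frac{\alpha}{2}$, the characteristic equation on $\mathcal{O}_\alpha$ becomes
\[
D(\lambda;\gamma,\tau):=2\cos(\gamma+\tfrac{\pi}{2})\cos\tau+(1-\sin\tau)\dot{y}_2+(1+\sin\tau)y_1+2y_2\tan\tfrac{\alpha}{2}=0,
\]
which is literally the equation used in the proof of Theorem~\ref{M2}.

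From here the argument of Theorems~\ref{M1} and \ref{M2} applies almost word for word. The no-joint-point hypothesis gives $\partial D/\partial\lambda\neq 0$ along $\lambda=\lambda_n(\gamma,\tau)$, so the implicit function theorem yields real analyticity. A critical point must satisfy $\partial D/\partial\gamma=-2\sin(\gamma+\pi/2)\cos\tau=0$, forcing either $\gamma\in\{\pi/2,3\pi/2\}$ or $\tau=\pm\pi/2$ (the two poles of $\mathcal{O}_\alpha$); together these constitute a single circle $C_0\subset\mathcal{O}_\alpha$. On $C_0$ the equation reduces to $\pm 2\cos\tau+(\dot{y}_2+y_1)+\sin\tau(y_1-\dot{y}_2)+2y_2\tan\frac{\alpha}{2}=0$, which for each fixed $\lambda$ admits at most two solutions in $\tau\in[-\pi/2,\pi/2]$. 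Compactness of $C_0$ together with a routine non-degeneracy check then forces exactly two critical points, one minimizer and one maximizer.

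Combining the $C_0$-equation with $\partial D/\partial\tau=0$ (which gives $\tan\tau=\pm(y_1-\dot{y}_2)/2$, hence $\cos\tau=2/\sqrt{4+(y_1-\dot{y}_2)^2}$ and $\sin\tau(y_1-\dot{y}_2)=\pm(y_1-\dot{y}_2)^2/\sqrt{4+(y_1-\dot{y}_2)^2}$) and collecting the two $\pm$-terms into $\pm\sqrt{4+(y_1-\dot{y}_2)^2}$ yields
\[
\pm\sqrt{4+(y_1-\dot{y}_2)^2}=-(\dot{y}_2+y_1)-2y_2\tan\tfrac{\alpha}{2},
\]
and squaring produces the equation in the statement. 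The interlacing $a_n<b_n<a_{n+1}<b_{n+1}$ then follows verbatim from the argument in Theorem~\ref{M1}: violating it would force either a double eigenvalue on $\mathcal{O}_\alpha$ (forbidden by the no-joint-point hypothesis) or two distinct minimizers of $\lambda_{n+1}$. The only genuinely new ingredient, namely the boundedness from below needed for continuity, has already been handled in Lemma~\ref{ii} via the asymptotic estimates of $y_1,y_2,\dot{y}_2$; so I expect no substantial obstacle here, and the remainder is essentially bookkeeping.
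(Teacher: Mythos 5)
Your proposal is correct and takes essentially the same approach as the paper: the paper's proof of Theorem~\ref{M3} simply defers to Theorem~\ref{M2} (and thence Theorem~\ref{M1}), and your substitution $t=-\cos\frac{\alpha}{2}\sin\tau$, which correctly accounts for $\cos\frac{\alpha}{2}<0$ when $\alpha\in(\pi,2\pi)$, reduces Eq.~(\ref{ch}) to exactly the equation used in the proof of Theorem~\ref{M2}, with continuity supplied by Lemma~\ref{ii} as intended. Your elimination of $\tau$ via $\partial D/\partial\tau=0$ on $C_0$ reproduces the stated critical-value equation, and the interlacing argument is the paper's own from Theorem~\ref{M1}, so the details you fill in are precisely those the paper omits.
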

\begin{proof}
The proof is similar to that of Thm.~\ref{M2} and we omit the details.
\end{proof}
\noindent\emph{Remark.} In \cite{Eas}, the authors obtained a general inequality among eigenvalues of different coupled boundary conditions. In fact, these boundary conditions lie on the circle parameterized by $\gamma$ in our adjoint orbit $\mathcal{O}$. In \cite{Bin}, this inequality was re-derived via variational characterization of eigenvalues. To certain extent, our inequality $a_n< b_n<a_{n+1}<b_{n+1}$ can be viewed as an extension in this direction--we consider an adjoint orbit rather than a circle in it.
\begin{example}
Let $q\equiv 0$. Then for $\lambda>0$, the equation in Thm.~\ref{M2} or Thm.~\ref{M3} is
\[\cos \sqrt{\lambda}+\frac{\sin \sqrt{\lambda} }{\sqrt{\lambda}}\tan \frac{\alpha}{2}=\pm1.\]
The critical points are $t=0$ and $\gamma=0$ or $\pi$.
\end{example}
As a conclusion of this section, we shall find out the level set $\Lambda^\kappa$ in an adjoint orbit $\mathcal{O}$ consisting of boundary conditions with $\kappa$ as an eigenvalue.
\begin{theorem}\label{S}Let $\mathcal{O}$ be an adjoint orbit and $p\in \mathcal{O}$. If $\lambda_n(p)=\kappa$ for some $n$, then the level set $\Lambda^\kappa$ is a set either consisting of a single point or diffeomorphic to a circle.
\end{theorem}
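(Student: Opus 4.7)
The plan is to identify every orbit $\mathcal{O}$ of principal type with the unit sphere $S^2\subset\mathbb{R}^3$ via Cartesian coordinates $(\xi_1,\xi_2,\xi_3)$ built from the angular parameters of Section~\ref{sec2}, and to show that in those coordinates the condition ``$\kappa$ is an eigenvalue'' becomes a single affine-linear equation with nonzero coefficient vector. Since the intersection of $S^2$ with any affine plane in $\mathbb{R}^3$ is empty, a single point (tangent case), or a smooth circle, and the hypothesis $p\in\Lambda^\kappa$ rules out the empty case, the stated dichotomy will follow.

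For $\mathcal{O}_{\mu,\nu}\subset\mathcal{U}_1$ I set $\xi_1=\sin 2\theta\cos\gamma$, $\xi_2=\sin 2\theta\sin\gamma$, $\xi_3=\cos 2\theta$; substituting into~(\ref{c1}) yields an equation of the form $2\nu\,\xi_1+\nu(y_1-\dot y_2)\,\xi_3=C$ for a $\kappa$-dependent constant $C$, manifestly affine-linear in $(\xi_1,\xi_3)$ and independent of $\xi_2$. This linearity is structural rather than coincidental: the only term in~(\ref{chh}) that is nonlinear in the entries of $A$ is $ac-|b|^2=\det A$, which is constant on $\mathcal{O}_{\mu,\nu}$. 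For $\mathcal{O}_\alpha\subset\mathcal{U}_0$ I use $t=\cos\frac{\alpha}{2}\sin\tau$ together with $\xi_1=\cos\tau\sin\gamma,\ \xi_2=\cos\tau\cos\gamma,\ \xi_3=\sin\tau$; after dividing through by $\cos\frac{\alpha}{2}$, both the characteristic equation preceding Lemma~\ref{i} and equation~(\ref{ch}) collapse uniformly to the affine-linear form $2\xi_1\pm(\dot y_2-y_1)\,\xi_3=C'$, the sign depending on whether $\alpha\in[0,\pi)$ or $\alpha\in(\pi,2\pi)$.

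In each of the three cases the coefficient of $\xi_1$ ($2\nu$ on $\mathcal{O}_{\mu,\nu}$, $\pm 2$ on the $\mathcal{O}_\alpha$'s) is nonzero on orbits of principal type, so the resulting equation genuinely defines an affine plane $P\subset\mathbb{R}^3$; hence $\Lambda^\kappa=S^2\cap P$ is empty, a single point, or a smooth circle, and the hypothesis excludes the first option. The main obstacle I anticipate lies in the $\mathcal{U}_0$ branch: for $\alpha\in(\pi,2\pi)$ one must carefully track the sign of $\cos\frac{\alpha}{2}$ (now negative) together with the positive-root convention for $\sqrt{\cos^2\frac{\alpha}{2}-t^2}$, and verify that $(\tau,\gamma)\mapsto(\xi_1,\xi_2,\xi_3)$ really parameterizes all of $\mathcal{O}_\alpha$, including the two ``poles'' $\tau=\pm\frac{\pi}{2}$ where $\gamma$ is no longer a valid coordinate.
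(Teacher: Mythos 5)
Your proposal is correct, and it reaches the dichotomy by a genuinely different route from the paper's own proof. You work case by case ($\mathcal{O}_{\mu,\nu}$, then $\mathcal{O}_\alpha$ with $\alpha\in[0,\pi)$ and $\alpha\in(\pi,2\pi)$), embed each orbit as the round sphere via Pauli-type coordinates, and check that the ODE-based characteristic equations (\ref{c1}), the one preceding Lemma~\ref{i}, and (\ref{ch}) restrict to affine equations in $(\xi_1,\xi_2,\xi_3)$; I verified all three linear forms you state, including the sign bookkeeping in the branch $\alpha\in(\pi,2\pi)$, where indeed $\sqrt{\cos^2\frac{\alpha}{2}-t^2}=-\cos\frac{\alpha}{2}\cos\tau$ since $\cos\frac{\alpha}{2}<0$. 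The pole issue you flag at $\tau=\pm\frac{\pi}{2}$ dissolves once you note that your parameterizations are just the linear identifications $A=\mu I+\nu\,\xi\cdot\vec{\sigma}$ and $U=ie^{i\alpha/2}\bigl(\sin\frac{\alpha}{2}I+i\cos\frac{\alpha}{2}\,\xi\cdot\vec{\sigma}\bigr)$ in a Pauli basis, which are real-analytic diffeomorphisms of $S^2$ onto the orbit, so the degeneration of $\gamma$ is a coordinate artifact only; your structural remark that the sole nonlinearity in (\ref{chh}) is $ac-|b|^2=\det A$, constant on the orbit, is exactly the right reason the equation is affine. The paper instead gives a single uniform argument covering orbits in $\mathcal{U}_0$ and $\mathcal{U}_1$ simultaneously: it writes a general element of $\mathcal{O}$ as $U(x,\gamma)$ with diagonal part interpolating the eigenvalues $\zeta_1,\zeta_2$, uses the canonical characteristic equation $\det(U-\Gamma(\kappa))=0$ from Eq.~(\ref{ce}), and, since conjugating both $U$ and $\Gamma(\kappa)$ by a fixed unitary is a diffeomorphism of $\mathcal{O}$ preserving the shape of the level set, assumes $\Gamma(\kappa)$ diagonal with eigenvalues $\varrho_1,\varrho_2$; the determinant condition is then independent of $\gamma$ and determines $x$ uniquely, so $\Lambda^\kappa$ is a latitude circle, degenerating to a pole exactly when some $\zeta_i$ coincides with some $\varrho_j$. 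What your approach buys is explicitness tied to the spectral data: you exhibit the cutting plane in terms of $y_1,\dot{y}_1,y_2,\dot{y}_2$ at $\lambda=\kappa$, so the single-point case is visibly the tangency case, meshing with the critical-point analysis of Thms.~\ref{M1}--\ref{M3}; what the paper's approach buys is uniformity (no case analysis, no explicit solution-based characteristic equations) and the sharper normal form locating the circle via $x$ and characterizing the degenerate case by coincidence of matrix eigenvalues of $U$ and $\Gamma(\kappa)$.
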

\begin{proof}
Let $\zeta_1\neq\zeta_2$ be the two eigenvalues of $\mathcal{O}\subset \mathrm{U}(2)$ and $\varrho_1\neq\varrho_2$ the two eigenvalues of $\Gamma(\kappa)$. The general element of $\mathcal{O}$ is of the following form:
\[U(x, \gamma)=\left( \begin{array}{cc}
\zeta_1 x+\zeta_2(1-x) & (\zeta_2-\zeta_1)\sqrt{x(1-x)}e^{i\gamma}\\
 (\zeta_2-\zeta_1)\sqrt{x(1-x)}e^{-i\gamma}& \zeta_1(1-x)+\zeta_2x \\
 \end{array}
    \right),\]
    where $x\in[0,1]$, $\gamma\in[0,2\pi)$. $\Lambda^\kappa\subset\mathcal{O}$ is characterized by the equation $\det(U(x,\gamma)-\Gamma(\kappa))=0$. Since we are only interested in the shape of $\Lambda^\kappa$, we can safely set $\Gamma(\kappa)=\left( \begin{array}{cc}
\varrho_1 & 0\\
 0& \varrho_2 \\
 \end{array}
    \right)$. This implies the following:
    \[x=-\frac{(\zeta_2-\varrho_1)(\zeta_1-\varrho_2)}{(\varrho_1-\varrho_2)(\zeta_2-\zeta_1)}.\]
    \[1-x=\frac{(\zeta_1-\varrho_1)(\zeta_2-\varrho_2)}{(\varrho_1-\varrho_2)(\zeta_2-\zeta_1)},\]

    If at least one of $\zeta_1, \zeta_2$ coincides with $\varrho_1$ or $\varrho_2$, then $x$ or $1-x$ equals zero and $\Lambda^\kappa$ consists of the only point $p=\left( \begin{array}{cc}
\zeta_1 & 0\\
 0& \zeta_2 \\
 \end{array}
    \right)$ or $\left( \begin{array}{cc}
\zeta_2 & 0\\
 0& \zeta_1 \\
 \end{array}
    \right)$.

    If $\zeta_1,\zeta_2$ are both different from $\varrho_1$ and $\varrho_2$, then both $x$ and $1-x$ are nonzero and determined by these values. It follows that $\Lambda^\kappa$ is diffeomorphic to $S^1$, parameterized by $\gamma$.
\end{proof}
\noindent\emph{Remark}. If $\mathcal{O}$ has no joint point with $\Gamma$, then from our previous result, $\Lambda^\kappa$ is actually the level-$\kappa$ set of $\lambda_n$. If $\kappa=a_n$ or $b_n$, then $\Lambda^\kappa$ consists of the minimizer or maximizer of $\lambda_n$. For other values of $\kappa$, $\Lambda^\kappa$ are all diffeomorphic to $S^1$.

\section{$\lambda_n$ as functions on the boundary circle of $\mathrm{H}/W$}
In the previous sections, we have mainly analyzed the behavior of $\lambda_n$ as functions on generic adjoint orbits represented by interior points in $\mathrm{H}/W$. However, attention should also be paid to points on the boundary circle $\partial(\mathrm{H}/W)$--important boundary conditions, such as Dirichlet, Neumann and Robin boundary conditions, lie on this circle. In this section, we shall consider $\lambda_n$ as functions on $\partial(\mathrm{H}/W)$. Note that $\partial(\mathrm{H}/W)$ can be naturally viewed as the diagonal circle $S_D$ of $\mathrm{H}$, consisting of matrices of the form $e^{i\theta}I$.

It is known that the range of $\lambda_n$ on $\mathrm{U}(2)$ is the same as that of $\lambda_n$ on $\mathrm{H}$ \cite{Kong1}, and the range is closely related to the eigenvalues of the Dirichlet boundary condition. Since $\mathrm{H}$ is 2-dimensional, it's possible to determine this range by restricting $\lambda_n$ on $S_D$. In fact, we have
\begin{theorem}\label{d}The range of $\lambda_n$ on $\mathrm{U}(2)$ is the same as that of $\lambda_n$ on $S_D$. More precisely, if $\Lambda_{n,\kappa}$ is the n-th level-$\kappa$ curve on $\mathrm{H}$, then $S_D$ intersects $\Lambda_{n,\kappa}$ at a unique point.
\end{theorem}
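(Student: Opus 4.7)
The plan is to reduce both assertions to the strict monotonicity of $\lambda_n$ along the circle $S_D$: injectivity of $\theta\mapsto\lambda_n(\theta)$ will give uniqueness of $S_D\cap\Lambda_{n,\kappa}$ for free, while the range equality will fall out of matching the endpoint behavior on $S_D$ with the known identity (from \cite{Kong1}) between the ranges of $\lambda_n$ on $\mathrm{U}(2)$ and on $\mathrm{H}$.

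I would first parametrize $S_D$ by $\theta\in[0,2\pi)$ so that its points are $U=e^{i\theta}I$. For $\theta\neq\pi$ the Cayley formula (\ref{Cay}) gives $A=-\tan(\theta/2)\,I$, and (\ref{Cay1}) decouples into the symmetric Robin pair $y'(0)=h\,y(0)$, $y'(1)=-h\,y(1)$ with $h:=\tan(\theta/2)\in\mathbb{R}$; the remaining point $\theta=\pi$ corresponds to $U=-I$, the Dirichlet condition with $\lambda_n(\pi)=\lambda_n^D$. The associated quadratic form is
\[
Q_h(y)=\int_0^1|y'|^2\,dx+\int_0^1 q|y|^2\,dx+h\bigl(|y(0)|^2+|y(1)|^2\bigr),\qquad y\in H^1.
\]
Because no nontrivial Robin eigenfunction satisfies $y(0)=y(1)=0$ (otherwise the Robin condition forces $y'(0)=0$ and ODE uniqueness kills $y$), $Q_h(y)$ is strictly increasing in $h$ along every minimizer, so by the min-max principle $\lambda_n(h)$ is a strictly increasing, real-analytic function of $h\in\mathbb{R}$. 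A standard form-domain argument further shows $\lambda_n(h)\nearrow\lambda_n^D$ as $h\to+\infty$, joining continuously with the Dirichlet value at $\theta=\pi$.

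Strict monotonicity immediately gives $|S_D\cap\Lambda_{n,\kappa}|\leq 1$ for every $\kappa$. To upgrade this to equality whenever $\kappa$ lies in the range, I would combine the identity of \cite{Kong1} with the classical upper bound $\lambda_n\leq\lambda_n^D$ on separated conditions (which caps the range on $\mathrm{H}$ at $\lambda_n^D$, attained on $S_D$) and then match the $h\to-\infty$ asymptotics of $\lambda_n$ on $S_D$ with the infimum of $\lambda_n$ on $\mathrm{U}(2)$: for $n=0,1$ the surface states of the Robin operator (with $\lambda\sim -h^2$) push $\lambda_n(h)\to-\infty$, matching the unboundedness below on $\mathrm{U}(2)$; for $n\geq 2$ the bulk eigenvalues concentrate at the Dirichlet eigenvalues, yielding $\lambda_n(h)\to\lambda_{n-2}^D$, which agrees with the infimum on $\mathrm{U}(2)$ furnished by the interlacing theorems of the Kong--Zettl school. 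The main technical obstacle is precisely this matching of infima; once granted, the strict monotonicity on $S_D$ guarantees that every $\kappa$ in the common range is hit exactly once on $S_D$, completing the proof.
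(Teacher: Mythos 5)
Your proposal is correct in substance but takes a genuinely different route from the paper. The paper's proof is a short reduction to Thm.~2.2 of \cite{Kong3}: writing separated conditions in the coordinates $(\alpha,\beta)$ of (\ref{B1}), the level curve $\Lambda_{n,\kappa}$ is the graph of a strictly increasing function $\alpha=f(\beta)$, while $S_D$ is the anti-diagonal $\alpha+\beta=\pi$, which is strictly decreasing; hence at most one intersection, and existence is read off from the geometry in \cite{Kong3} (Fig.~1 there). You instead work intrinsically on $S_D$: you correctly identify its points with the symmetric Robin family $y'(0)=h\,y(0)$, $y'(1)=-h\,y(1)$, $h=\tan(\theta/2)$ (consistent with the paper's subsequent corollary, where $h=-\cot\beta$), prove strict monotonicity of $h\mapsto\lambda_n(h)$, and match the endpoint behavior $\lambda_n(h)\nearrow\lambda_n^D$ as $h\to+\infty$, $\lambda_n(h)\to\lambda_{n-2}^D$ (resp.\ $-\infty$ for $n\le 1$) as $h\to-\infty$, against the known range of $\lambda_n$ on $\mathrm{U}(2)$. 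This inverts the paper's logical order: the paper deduces the strict monotonicity of $\lambda_n$ on $S_D$ as a \emph{corollary} of Thm.~\ref{d}, whereas you prove monotonicity first and derive the theorem from it. Your route buys the explicit range $(\lambda_{n-2}^D,\lambda_n^D]$ for $n\ge 2$ and $(-\infty,\lambda_n^D]$ for $n\le 1$, which the paper never writes down, at the cost of the Robin-wall asymptotics and of importing the global inequalities $\lambda_{n-2}^D<\lambda_n\le\lambda_n^D$; the paper's route is shorter but leans entirely on the level-curve structure theorem of \cite{Kong3}.

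Two steps need tightening. First, min-max alone gives only \emph{non-strict} monotonicity of $\lambda_n(h)$, since the optimal subspace moves with $h$; the clean fix is the derivative formula $d\lambda_n/dh=|y_n(0)|^2+|y_n(1)|^2$ from \cite{Kong1}, which is strictly positive by exactly the non-vanishing observation you make (an eigenfunction with $y(0)=y(1)=0$ would also satisfy $y'(0)=0$ and vanish identically). Second, for the range matching you need the \emph{strict} lower bound $\lambda_n>\lambda_{n-2}^D$ at every self-adjoint boundary condition, not merely $\ge$: the value $\lambda_{n-2}^D$ is a limit but not a value on $S_D$, so without strictness the two ranges could differ at that left endpoint. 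Strictness follows from the codimension-two form-domain argument ($H^1_0$ has codimension at most $2$ in every form domain, and all forms agree on $H^1_0$), noting that equality would produce a $\lambda_n$-eigenfunction with $\psi=0$, hence $\dot\psi=A\psi=0$, forcing it to vanish; this is also available in the Kong--Wu--Zettl papers you invoke. With these two points filled in, your argument is complete and independent of \cite{Kong3}.
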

\begin{proof}
The proof is based on Thm.~2.2 in \cite{Kong3}, where in fact the level curve $\Lambda_{n, \kappa}\subset \mathrm{H}$ is characterized.

In \cite{Kong3}, boundary conditions in $\mathrm{H}$ are written in the form of Eq.~(\ref{B1}). It is easy to find that the diagonal of $\mathrm{H}$ corresponds to $\alpha, \beta$'s satisfying $\alpha+\beta=\pi$. The level curve $\Lambda_{n, \kappa}\subset \mathrm{H}$ can be written as
\[\{(\alpha,\beta)\in [0, \pi)\times (0,\pi]|\alpha=f(\beta), \beta\in J_0\},\]
where the precise form of the interval $J_0\subset (0, \pi]$ depends on whether $\kappa>\lambda_{n-1}^D$ or not, and the function $f$ is strictly increasing on $J_0$. So if $S_D$ intersects $\Lambda_{n, \kappa}$, the intersection point is unique. As for the existence of the intersection point, it can be derived easily from the argument of \cite{Kong3}, cf. Fig.1 there.
\end{proof}
\noindent \emph{Remark}. It should be pointed out that in \cite{Kong3} there is another "diagonal" $\mathscr{C}$ in $\mathrm{H}$ (see Eq.~(1.10) in \cite{Kong3}), which is different from ours. $\mathscr{C}$ corresponds to $\alpha, \beta$'s satisfying $\alpha=\beta$, rather than $\alpha+\beta=\pi$. Thm.~\ref{d} does not hold when $S_D$ is replaced by $\mathscr{C}$.
\begin{corollary}If $S_D$ is parameterized by $\beta\in(0, \pi]$ (so $\alpha=\pi-\beta$), then for each $n$, $\lambda_n$ as a function of $\beta$ is strictly increasing and continuous.
\end{corollary}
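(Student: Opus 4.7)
The plan is to derive the corollary from Theorem~\ref{d} by combining the injectivity of $\lambda_n|_{S_D}$ (which Theorem~\ref{d} supplies directly) with continuity, and then invoking the elementary fact that an injective continuous function on a real interval is strictly monotonic; the direction of monotonicity will be pinned down by a single explicit comparison at the end.

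First, injectivity. Suppose $\lambda_n(\beta_1)=\lambda_n(\beta_2)=\kappa$ for $\beta_1,\beta_2\in(0,\pi]$. Then both points $(\pi-\beta_i,\beta_i)$ belong to $S_D\cap\Lambda_{n,\kappa}$, which by Theorem~\ref{d} is a single point; hence $\beta_1=\beta_2$.

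Second, continuity of $\beta\mapsto\lambda_n(\beta)$ on $(0,\pi]$. For $\beta\in(0,\pi)$ the corresponding matrix $U_\beta=e^{i(2\beta-\pi)}I$ lies in $\mathcal{U}_1$, so continuity on this open arc is already covered by the earlier proposition asserting that $\lambda_n$ is continuous on $\mathcal{U}_1$. The delicate endpoint is $\beta=\pi$, where $U_\pi=-I\in\mathcal{U}_0$. I plan to handle it by applying the continuity principle to a closed subarc $[c,\pi]$ with $c\in[\pi/2,\pi)$. A direct computation shows that the separated data encoded by $U_\beta$ are $y'(0)=-\cot\beta\cdot y(0)$ and $y'(1)=\cot\beta\cdot y(1)$, so the associated quadratic form on $H^1$ reads $Q_\beta(y)=\int_0^1(|y'|^2+q|y|^2)\,dx-\cot\beta(|y(0)|^2+|y(1)|^2)$. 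Since $\cot\beta\le 0$ throughout $[\pi/2,\pi)$, the boundary contribution is nonnegative, which gives $\lambda_0(\beta)\ge\min_{J}q$ uniformly on $[c,\pi)$; the Dirichlet value $\lambda_0^D$ at $\beta=\pi$ satisfies the same bound. The continuity principle then yields continuity of every $\lambda_n$ on $[c,\pi]$, hence at $\beta=\pi$.

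Combining, $\lambda_n(\beta)$ is a continuous injection on the connected interval $(0,\pi]$, hence strictly monotonic. To fix the direction I would compare the Neumann point $\beta=\pi/2$ (where $U=I$) with the Dirichlet point $\beta=\pi$ (where $U=-I$): the classical strict inequality $\lambda_n^N<\lambda_n^D$, which follows from the strict inclusion $H^1_0\subsetneq H^1$ of form domains together with an ODE uniqueness argument ruling out coincidence of Neumann and Dirichlet eigenfunctions, forces $\lambda_n(\pi/2)<\lambda_n(\pi)$, so $\lambda_n$ is strictly increasing. The only step I expect to require genuine care is the continuity at $\beta=\pi$, since $-I\in\mathcal{U}_0$ and $\lambda_n$ is generically discontinuous on $\mathcal{U}_0$ inside $\mathrm{U}(2)$; every other ingredient is either Theorem~\ref{d} itself or standard point-set reasoning.
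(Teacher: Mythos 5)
Your proof is correct, and it takes a genuinely different route from the paper's, although both hinge on Theorem~\ref{d}. The paper's own proof is two lines: continuity is quoted from Lemma~2.1 of \cite{Bin}, and strict monotonicity comes from the min-max principle applied directly to the family of forms $Q_\beta(y)=\int_0^1\left(|y'|^2+q|y|^2\right)dx-\cot\beta\,|\psi|^2$ --- since $-\cot\beta$ is increasing on $(0,\pi)$, min-max gives that $\lambda_n$ is non-decreasing in $\beta$, and Theorem~\ref{d} upgrades this to strict. You use the same two ingredients in a different order: Theorem~\ref{d} gives injectivity of $\lambda_n|_{S_D}$ outright; continuity is established internally rather than by citation (the proposition that $\lambda_n$ is continuous on $\mathcal{U}_1$ covers $\beta\in(0,\pi)$, since indeed $U_\beta=e^{i(2\beta-\pi)}I\in\mathcal{U}_1$ there, and your continuity-principle argument on $[\pi/2,\pi]$, with the uniform bound $\lambda_0\geq\min_J q$ coming from $-\cot\beta\geq 0$, correctly handles the genuinely delicate Dirichlet endpoint $U_\pi=-I\in\mathcal{U}_0$); then the elementary fact that a continuous injection on an interval is strictly monotone does the rest, with the direction fixed by one comparison. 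What your version buys: it is self-contained within the paper's own toolkit (no appeal to \cite{Bin}) and makes explicit why continuity survives at the $\mathcal{U}_0$ point $-I$, where $\lambda_n$ is generically discontinuous in $\mathrm{U}(2)$ --- a point the paper leaves implicit. What the paper's version buys: the monotone dependence of $Q_\beta$ on $\beta$ yields monotonicity on all of $(0,\pi]$ at once, with no need for an endpoint comparison or the injection-implies-monotone argument.

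One small repair to your final step: the strict inclusion $H^1_0\subsetneq H^1$ of form domains by itself only yields $\lambda_n^N\leq\lambda_n^D$, not strictness, and the appended ``ODE uniqueness argument'' is left too vague to carry that weight. But you do not actually need the classical strict inequality: the non-strict inequality from min-max together with the injectivity you already extracted from Theorem~\ref{d} gives $\lambda_n(\pi/2)\neq\lambda_n(\pi)$, hence $\lambda_n^N<\lambda_n^D$ and the increasing direction. (If you want the classical fact independently, the clean argument is Sturm separation/oscillation: were $\lambda_n^N=\lambda_n^D$, the zeros of the two linearly independent eigenfunctions would interlace, which is incompatible with both the $n$-th Dirichlet and the $n$-th Neumann eigenfunctions having exactly $n$ interior zeros.)
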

\begin{proof}
That $\lambda_n$ is continuous can be derived from Lemma 2.1 in \cite{Bin}. For $\beta\in (0, \pi)$, the associated quadratic form is
\[Q_\beta(y)=\int_0^1|y'|^2 dx+\int_0^1q(x)|y|^2dx-\cot \beta |\psi|^2,\quad y\in H^1\]
and the strict monotonicity of $\lambda_n$ is a conclusion of the min-max principle and Thm.~\ref{d}.
\end{proof}
\section*{Acknowledgemencts}
 This study is supported by the Fundamental Research Funds for the Central Universities (2014B14514).
% end of file template.tex

\end{document}